\newcommand{\ie}{\emph{i.e.},~}
\newcommand{\eg}{\emph{e.g.},~}
\def\F{Figure~}
\newif\ifistr
\newtheorem{theorem}{Theorem}
\newtheorem{claim}[theorem]{Claim}
\newtheorem{lemma}[theorem]{Lemma}
\newtheorem{definition}{Definition}
\begin{document}

\markboth{A. Ramachandran and A. Chaintreau}{Knowledge Sharing Economy}

\title{Who Contributes to the Knowledge Sharing Economy?}
\author{
Arthi Ramachandran and Augustin Chaintreau\\
\affaddr{Computer Science Department, Columbia University, New York NY, USA}\\
\texttt{arthir@cs.columbia.edu, augustin@cs.columbia.edu}
}


\maketitle
\begin{abstract}
Information sharing dynamics of social networks rely on a small set of influencers to effectively reach a large audience. Our recent results and observations demonstrate that the shape and identity of this elite, especially those contributing \emph{original} content, is difficult to predict. Information acquisition is often cited as an example of a public good. However, this emerging and powerful theory has yet to provably offer qualitative insights on how specialization of users into active and passive participants occurs.

This paper bridges, for the first time, the theory of public goods and the analysis of diffusion in social media. We introduce a non-linear model of \emph{perishable} public goods, leveraging new observations about sharing of media sources. The primary contribution of this work is to show that \emph{shelf time}, which characterizes the rate at which content get renewed, is a critical factor in audience participation. Our model proves a fundamental \emph{dichotomy} in information diffusion: While short-lived content has simple and predictable diffusion, long-lived content has complex specialization. This occurs even when all information seekers are \emph{ex ante} identical and could be a contributing factor to the difficulty of predicting social network participation and evolution.

\end{abstract}

\category{G.2.2}{Graph Theory}{Network problems}
\terms{Theory, Economics, Measurement}
\keywords{Online diffusion; Economics of information; Social networks}

\section{Introduction}
\label{sec:intro}

In social network services, such as Twitter and Facebook, the primary commodity produced and exchanged is content and information. While, arguably, much of this process is solely hedonic, these social conversations play an increasingly larger role in today's economy. The revenue of content publishers is now primarily driven by audience originating from online social networks~\cite{Olmstead:2011wz}; brands increasingly channel their products to a targeted audience alongside content exchange~\cite{Liu:2014dw}; new business models aim at integrating with peer connections, sometimes competing with traditional firms in providing accommodation, car ride, or financial services~\cite{Byers:2014vl,Cici:2014gs,An:2014tu}. This is unsurprising since decades of empirical studies, predating any online conversation, have shown how individuals rely on their peers or contacts to acquire information before making a choice. It could be to cast a vote~\cite{PF:1948us}, to keep up to date with new products~\cite{Geissler:2005ce}, or to gather important data in the working place~\cite{Cross:2004wy}. 

Our goal is to understand how individual choices govern how \emph{original} information is produced and acquired in today's social networks. We focus on the domain of identification of news content worth reading, where social connections are massively used. As we are all aware, acquiring original information requires effort and some time investment. Social networks benefit users by making the result of this effort available to more people. Previous studies highlighted that most of the population receives original information from a small set of opinion leaders or influentials. To put it bluntly, only a minority of participants add information to those networks, as opposed to simply listening or passing it on (via, e.g., retweets, likes). Many important open questions remain: In a given network, which users have an incentive to produce more original content? Previous studies have shown that influencers are not easy to differentiate from ordinary users. Can we predict the outcome of such a mechanism, where some users specialize? Are there types of content or networks that favor the formation of an elite?

To answer the above questions, we first conduct an empirical study of original information in news diffusion on social media. We then show how they relate to mathematical analysis of a variant of \emph{public goods}. In contrast with some other goods, most online news are tailored for a particular shelf-life. Our results show that this appears to be one of the primary factors which governs both how activity is distributed, and how multiple types of specialization appear in a dynamic non linear public goods model. We show the following contributions: 
\begin{enumerate}
\item We analyze data from multiple online sources exchanged through Twitter, highlighting the production of original content remains extremely concentrated. Barring institutional accounts, the majority of the original content comes from users with mid-range popularity rather than just the just well known people. In fact, counterintuitively, original content production is skewed towards less active and connected people. We also make the following observations: the size of this active minority in proportion to the audience appears to follow primarily from the shelf-life of the content exchanged. Long term content appears to favor a smaller elite, while short-lived information expands the size of active participants.
(Section~\ref{sec:observation}).
\item Since the availability of news worth reading in a social network exhibits the property of a public good, we propose a simple model that extend public good theory to accommodate investment made by individual players towards a perishable good. We show that it reproduces previous observations and does correlate with the activity we empirically observed.
(Section~\ref{sec:model}).
\item 
This model allows us to answer how specialization occurs in knowledge sharing, even where players are \emph{ex ante} identical. We first prove that a unique Nash Equilibrium exists for sufficiently short-lived content, under a condition related to spectral properties of the social network. 
However, we prove that when content is long-lived, specialization is unavoidable, even with identical players on a symmetric graph. Given the presence of multiple equilibria and sensitivity to initial conditions, predictions are complex. 
(Section~\ref{sec:analysis}).
\end{enumerate}

To the best of our knowledge, our paper is the first example that bridges predictions of the behavior of players in a public good game, with empirical evidence from one of its motivating example: information acquisition. The main novelty of our approach is to model information as a public good with decaying value over time \ie they are perishable goods. As a public good, the utility of information to a user comes from her own contributions as well as those of her neighbors. This new approach allows theory and practice to qualitatively align, in spite of simplistic modeling of user behavior. Perishable public goods create non-linear best response, which makes the analysis more complex, but we hope that this first step can motivate more work in this area. Our work is also, to the best of our knowledge, the first one that analyzes the characteristics and shape of the group of users with an original contribution. This addresses a critical problem as social media are typically described as full of noise and redundancy. Our results may further inform how to promote and reward users for their participation, and mechanisms to design social media which makes user well informed.

\section{Who is acquiring new information?}
\label{sec:observation}
Early studies consider information diffusion as two-step model of information flow, with large cascades originating at institutional sources, followed by a series of connectors. However, more recent results~\cite{Goel:2012io} proved that the vast majority of content is received \emph{directly} from one content originator. Knowledge sharing in social media hence depends on some users to exert effort to acquire \emph{original} information. Original content is obtained externally to the social network, either through search engines, time spent on informal web browsing, or offline conversations. To the best of our knowledge, little is known about the characteristics of the users performing that task, although one expects them to be a minority.

To better understand these dynamics, we analyze two complementary datasets: \textbf{(1) The KAIST dataset} (see \cite{Cha:2010tr} for more details) contains the entire Twitter graph from August 2009 and consists of 8m users and 700m links. Taken over the course of a month, the dataset contains 183m tweets. Of these tweets, we considered only those with urls (37m) since those are the tweets that provide an indication of sharing media on twitter. Further, we filtered the tweets by news domains (\eg \texttt{nytimes.com}). The classification of a domain as news was obtained from the Open Directory Project (\texttt{http://www.dmoz.org/}), a volunteer edited directly of Web links. Each link in the directory is annotated with a top level categories and multiple levels of subcategories. In our analysis, we only took into account the top level category. We kept all the domains with a reasonable number of posts ($>2000$ posts) resulting in 31 domains. We removed domains which did not seem to follow the same definition of news as others (aggregators such as e.g. \texttt{news.google.com} and \texttt{reddit.com}, weather services such as \texttt{weather.gov}, and region specific domains such as \texttt{thehindu.com}). 
While the KAIST dataset provides a holistic view of the media landscape, we complement it with a denser, newer snapshot we collected ourselves: (2) \textbf{The NYT dataset}~(see \cite{May:2014iwa} for more details) contains all the Twitter posts containing a URL from the nytimes.com domain during a full week of December 2011. In parallel, we crawled the follower-followee relationship at the same time in order to construct the URLs that each user received. The final dataset totals 346k unique users receiving a total 22m tweets with URL (including multiplicity). Of these, there are 70k unique links.

\subsection{Imbalanced content creation}

Unsurprisingly, in social media like Twitter, a small fraction of users are responsible for a large part of the activity. To quantify this concentration, we use the Lorenz curve~\cite{Lorenz:1905vb}, or the cumulative share of the top $x$\% of users as a function of $x$, in \F\ref{fig:Lorenz}. Since some domains only cater to niche groups, the fraction $x$ here is measured relative to the domain’s audience size (\ie anyone who received or sent at least one such URL). 
\begin{figure}[h!]
\subfloat{\includegraphics[width = 0.495\textwidth]{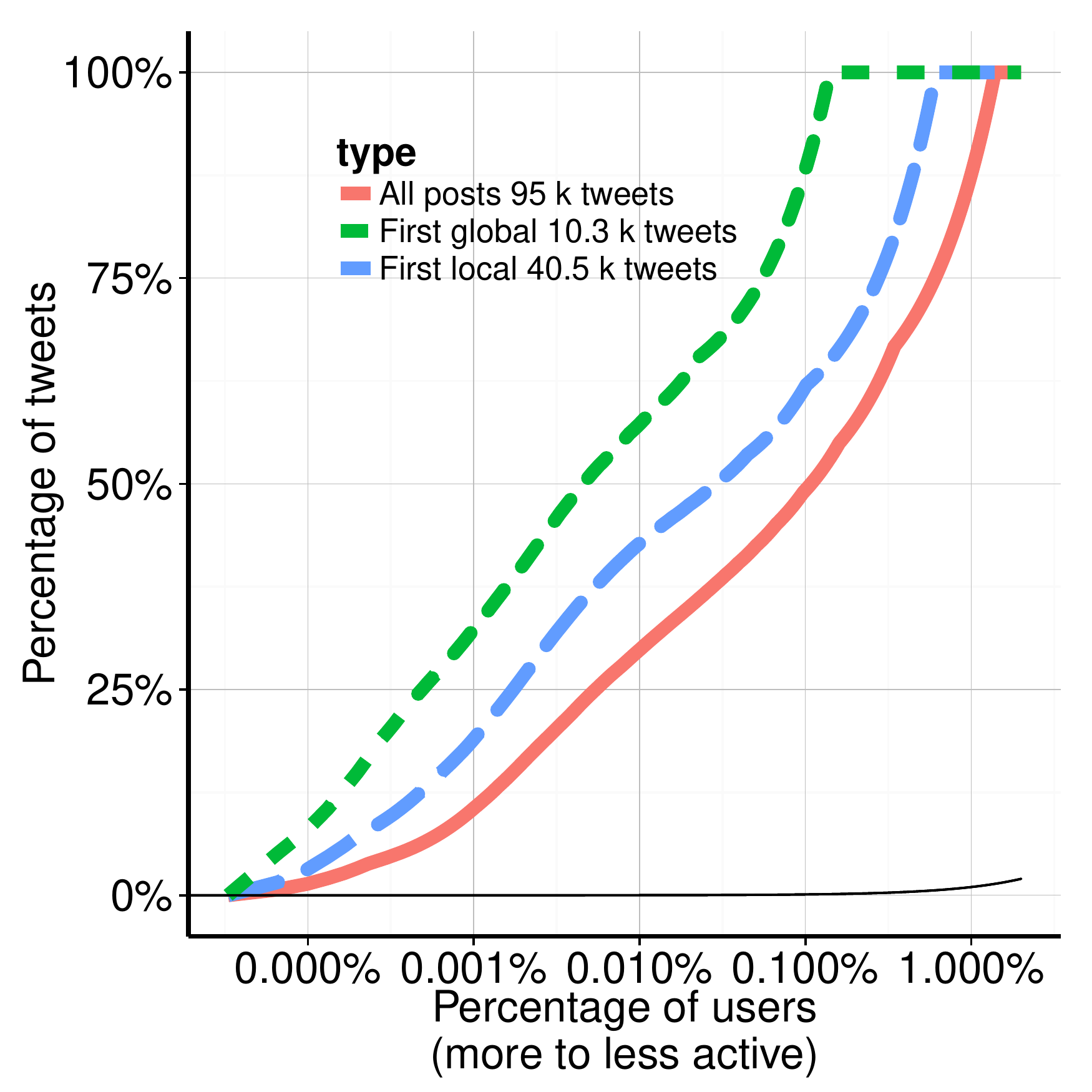}} \\
 \vspace{-15pt}
\caption{Lorenz curve (\ie cumulative share of the top x\% nodes in the audience seen as a function of x): (top) comparing production of tweets and original content for \texttt{cnn.com} from KAIST; (bottom) comparison of Lorenz curve for ``first local’’ tweets in two different domains.}
\label{fig:Lorenz}
\end{figure}
A quick glance at the plot confirms that the size of passive and active audience differ by orders of magnitude (\eg as seen here and in other domains, 99\% do not tweet a single URL. Equivalently, 1\% of the audience produces almost all the new tweets in the network). 

In addition to examining how users post in general (red solid line), we also look at how they acquire original information for the network. We, hence, looked at users who were the first on twitter to post a url link (“global first” represented by the short green dotted line) and users who were the first in their local network, i.e. they did not receive the url from anyone they followed before they sent the url (“local first” represented by the long blue dashed line). Note that in each of these cases, the overall audience remains the same - those who have received the link either directly or indirectly from an originator. Here, in the left figure, 0.1\% of the \texttt{cnn.com} audience produces half of all tweets. But the same number of people produce 60\% of the globally original content and almost 90\% of the locally original content. Perhaps unsurprisingly, while only a small minority of nodes repost articles, it is an even smaller minority that introduces original content in the network. 

\emph{Specialization} is the phenomenon of users taking extreme positions - in our case, some users expend a lot of effort while others are on the other extreme of expending almost no effort. To help quantify this phenomenon, we introduce the 90\%-volume originators measure. This is the fraction of the audience that together produce 90\% of the volume. While we later study how this metric of specialization varies with different content type, we first study the minority of originators in more detail.

\subsection{Characterizing content originators}

It has been shown (see, \eg~\cite{May:2014iwa}) that a user's tweeting activity is strongly correlated with their in- and out-degree. Intuitively, an active online presense is required to gather many followers. Having many followers encourages a return connection by other users. Most Twitter users remain passive in diffusing information, and those promoting original content are a tiny minority. One hypothesis of a simple hierarchy of social media emerges: the content producers responsible for new content creation, the power users and intermediaries who drive the traffic and the passive consumers. As we see here, reality is at odds with this expectation when it comes to production of original content. 

\begin{figure}[!ht]
\subfloat{\includegraphics[width = 3in]{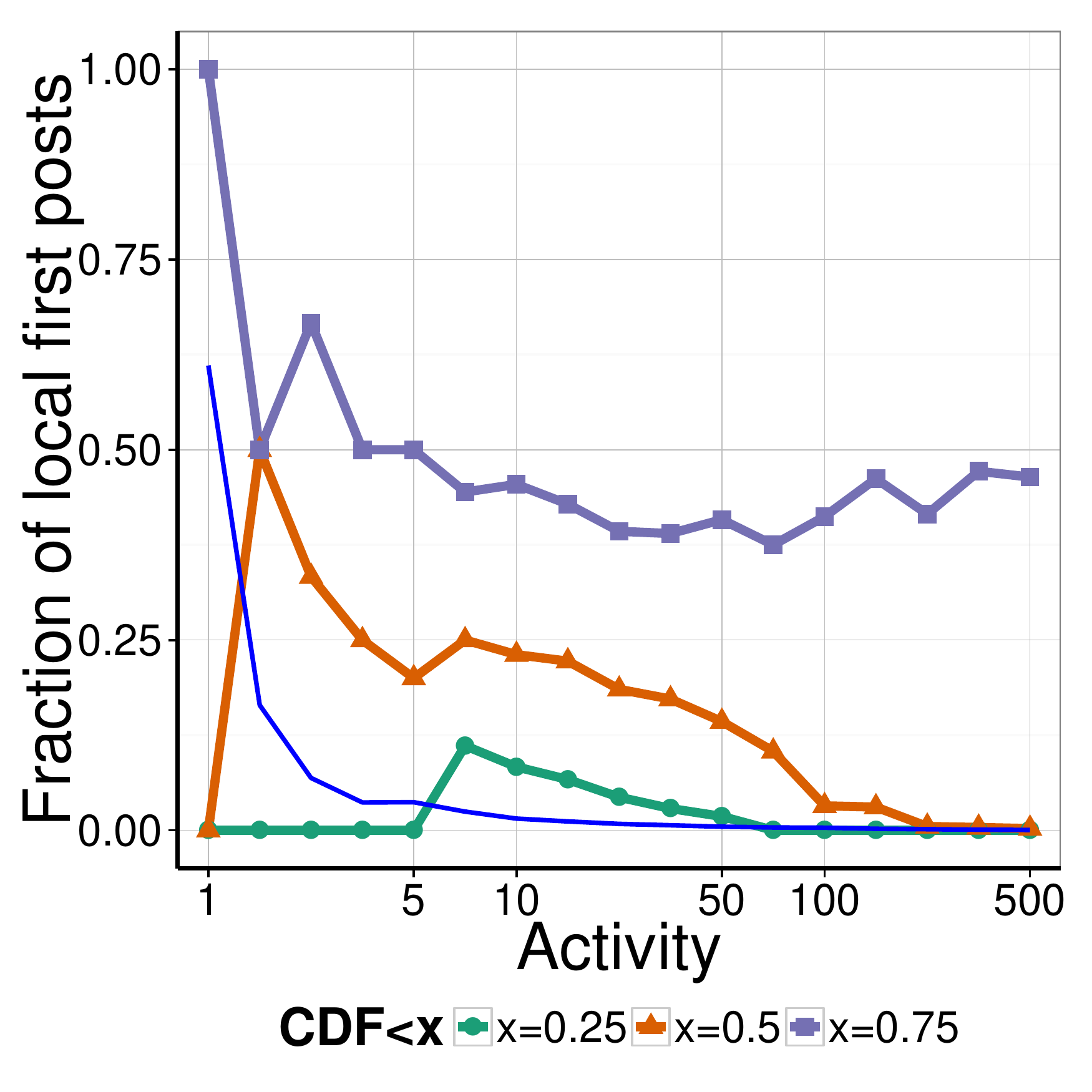}}\\
 \vspace{-10pt}
 \subfloat{\includegraphics[width = 3in]{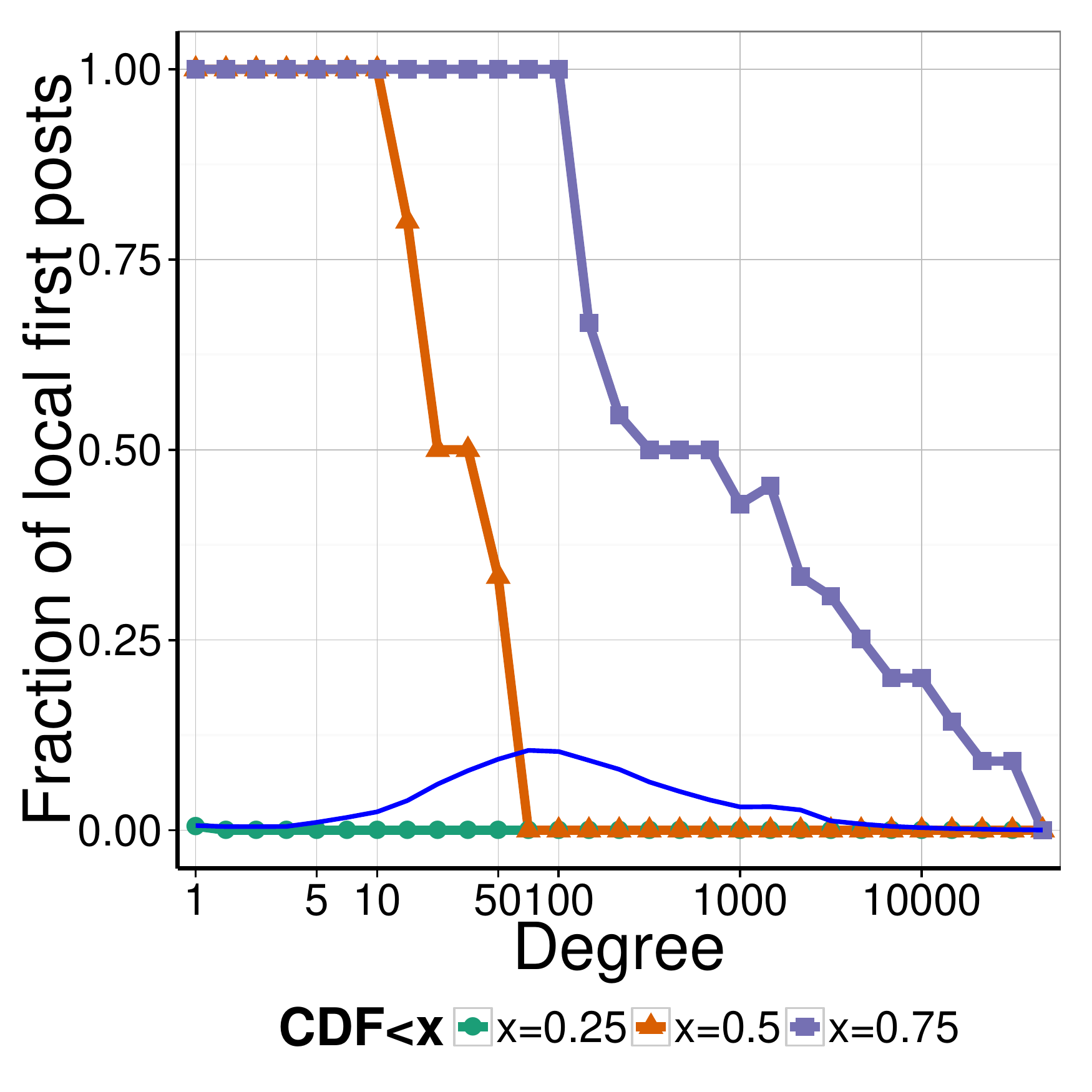}}
 \vspace{-5pt}
 \caption{Fraction of locally original activity, presented as percentiles among users population binned according to (top) activity and (bottom) number of accounts they follow.}
 \vspace{-10pt}
 \label{fig:frac_orig_vs_degree}
\end{figure}

\F\ref{fig:frac_orig_vs_degree} (left) presents, for users binned according to their activity on the x-axis, the distribution of the fraction of local first content they produce with median and various percentiles. To help interpretion, we represent qualitatively with a thin solid line the number of users in each bin, where the first bin contains approximately 129k users. On the right we observe the effects of a few heavy nodes: there are in total 90 users posting more than 400 URLs in a month, who are primarily either institutional accounts or professional journalists and are almost always original. However, those are exceptions: among the active users, originators are generally a minority – typically the 25\% most original – chosen across all activity levels. On the contrary, this trend proves that a URL is most likely to be locally original when it is posted by less active users. Equivalently, if the authors of that tweet post approximately 50 URLs in a month, it is likely to be one she has previously received. Another concurring observation, shown in \F\ref{fig:frac_orig_vs_degree} (right), presents the same distribution where users are binned on the x-axis according to the number of people they follow. The trend here is even more pronounced as users belonging to the less connected half are much more likely to produce original information. 

While, at first, this trend appears relatively surprising, the theory of public goods offers a simple explanation that we leverage later: that the effort exerted by others creates a disincentive for a well connected player to acquire new information. It seems in particular that 50\% of users with larger than average degree rely entirely on the information they receive for their posts.


\subsection{Effect of Time}

Finally, we study the factors quantitatively affecting specialization. To take an example, first, we show in \F\ref{fig:Lorenz2} a comparison between the Lorenz curves for two news media domains: New York Times and The Atlantic. These are different in multiple ways: The New York Times is a daily newspaper with a very large readership while the Atlantic is a monthly magazine with a smaller readership. Within the KAIST dataset, 111k \texttt{nytimes.com} tweets were posted (and an audience of 2.6m users) while 4.7k \texttt{theatlantic.com} tweets were posted (audience of 400k users). Of these tweets only a small fraction are unique links (5917 for \texttt{nytimes.com} vs 891 for \texttt{theatlantic.com}) \cite{Ramach2015}. When comparing lorenz curves, the Atlantic is more specialized than the New York Times with 0.4\% of the audience accounting for 75\% of \texttt{theatlantic.com} tweets while 0.8\% of the audience accounts for 75\% of \texttt{nytimes.com} tweets. This indicates that audiences of different sources specialize in different ways.
\begin{figure}[h!]
\subfloat{\includegraphics[width = 0.495\textwidth]{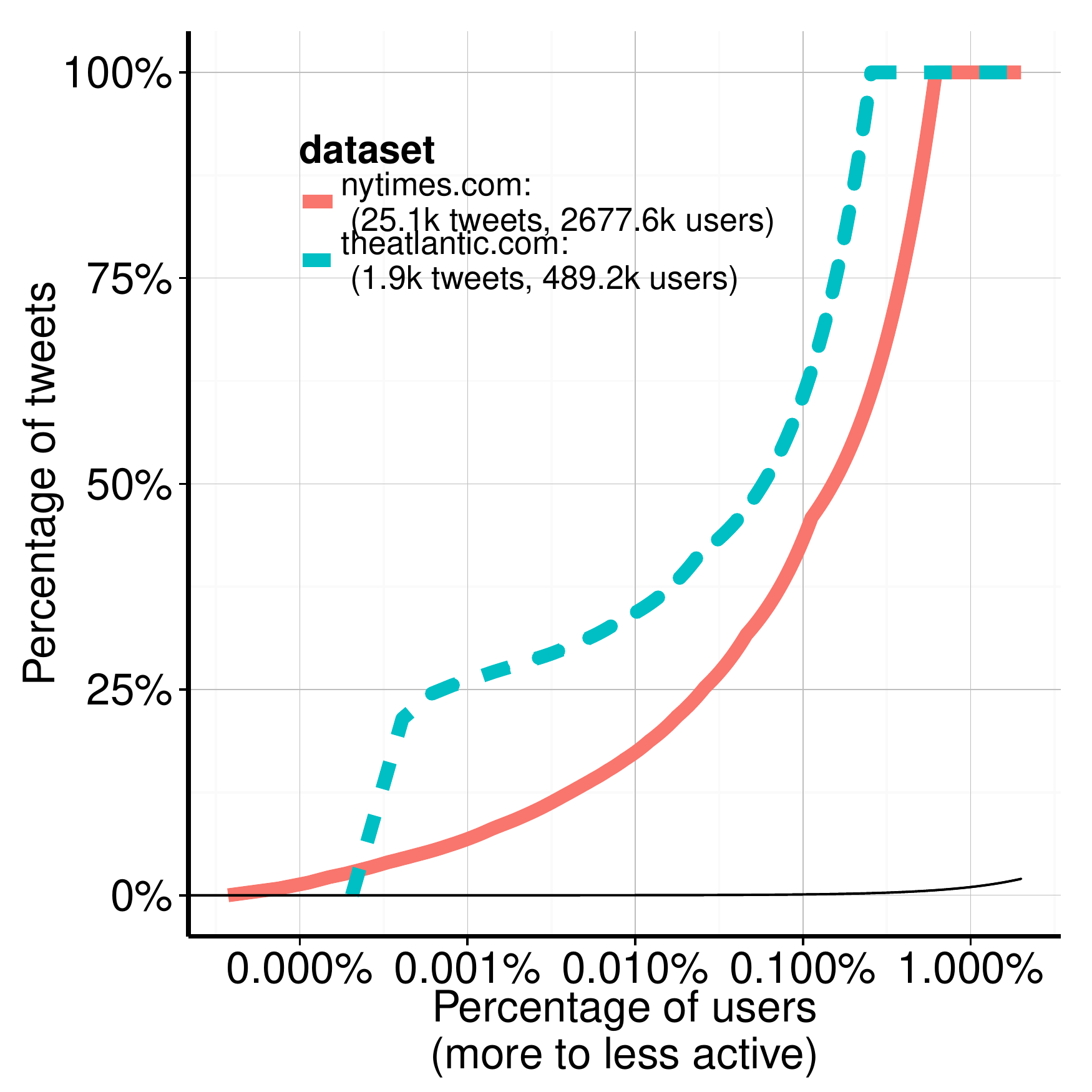}}
 \vspace{-15pt}
\caption{Lorenz curve for ``first local’’ tweets in two different domains.}
\label{fig:Lorenz2}
\end{figure}

\ifistr
\clearpage
\begin{table}[h]
\begin{tabular}{l | lllll}
\multicolumn{1}{c}{domain} & \multicolumn{1}{c}{\begin{tabular}[c]{@{}c@{}}\# unique \\ URLs\end{tabular}}  & \multicolumn{1}{c}{\begin{tabular}[c]{@{}c@{}}\# users \\ (who receive or post)\end{tabular}} & \multicolumn{1}{c}{\begin{tabular}[c]{@{}c@{}}\# users \\ posting\end{tabular}} & \multicolumn{1}{c}{\# posts} & \multicolumn{1}{c}{\begin{tabular}[c]{@{}c@{}}expiration time \\ estimate (min)\end{tabular}} \\
\hline
bbc.co.uk & 19600 & 3252997 & 6248& 113693 & 2.20\\
businessweek.com & 777& 622615 & 927 & 9405& 55.60\\
cnn.com & 10255 & 3026569& 4458& 94965 & 4.21\\
csmonitor.com& 337& 460161 & 492 & 3561& 128.19 \\
economist.com& 232& 802242 & 922 & 3714& 186.21 \\
forbes.com& 1934& 921576 & 1198& 12375 & 22.34 \\
foxnews.com & 1529& 510935 & 2845& 19383 & 28.25\\
ft.com& 6750& 1373373& 2647& 28497 & 6.4\\
guardian.co.uk & 5612& 2106241& 2294& 45911 & 7.70\\
huffingtonpost.com & 3742& 1443562& 2492& 36974 & 11.54\\
mirror.co.uk & 1306& 638255 & 708 & 4863& 33.08\\
news.yahoo.com & 7684& 1467238& 5227& 65734 & 5.62 \\
newsweek.com & 517& 783171 & 679 & 3465& 83.56 \\
newyorker.com& 299& 754866 & 656 & 2444& 144.48 \\
npr.org & 447& 1220573& 1066& 12100 & 96.64  \\
nytimes.com & 5917& 2677563& 4085& 111674& 7.30 \\
online.wsj.com & 5077& 1394111& 2075& 37581 & 8.51 \\
reuters.com & 16634 & 1435299& 2621& 61955 & 2.60 \\
salon.com & 803& 1082391& 745 & 4501& 53.80 \\
slate.com & 518& 676407 & 897 & 3097& 83.40 \\
theatlantic.com& 5917& 489222 & 804 & 4670& 7.30\\
theonion.com & 795& 1427288& 1238& 11969 & 54.34 \\
time.com& 2293& 530981 & 4370& 14299 & 18.84 \\
usatoday.com & 3281& 1141070& 1570& 20912 & 13.17 \\
usnews.com& 1089& 580657 & 373 & 4222& 39.67 \\
vanityfair.com & 162& 598879 & 743 & 2261& 266.67 \\
washingtonpost.com & 2886& 1755915& 2051& 35554 & 14.97 \\
wired.com & 1751& 1325465  & 2307    & 17640   & 24.67
\end{tabular}
\caption{Expiration times of different news sources}
\label{tab:domain_props}
\end{table}
\clearpage
\fi

\begin{figure*}[!ht]
\subfloat{\includegraphics[width=0.9\textwidth]{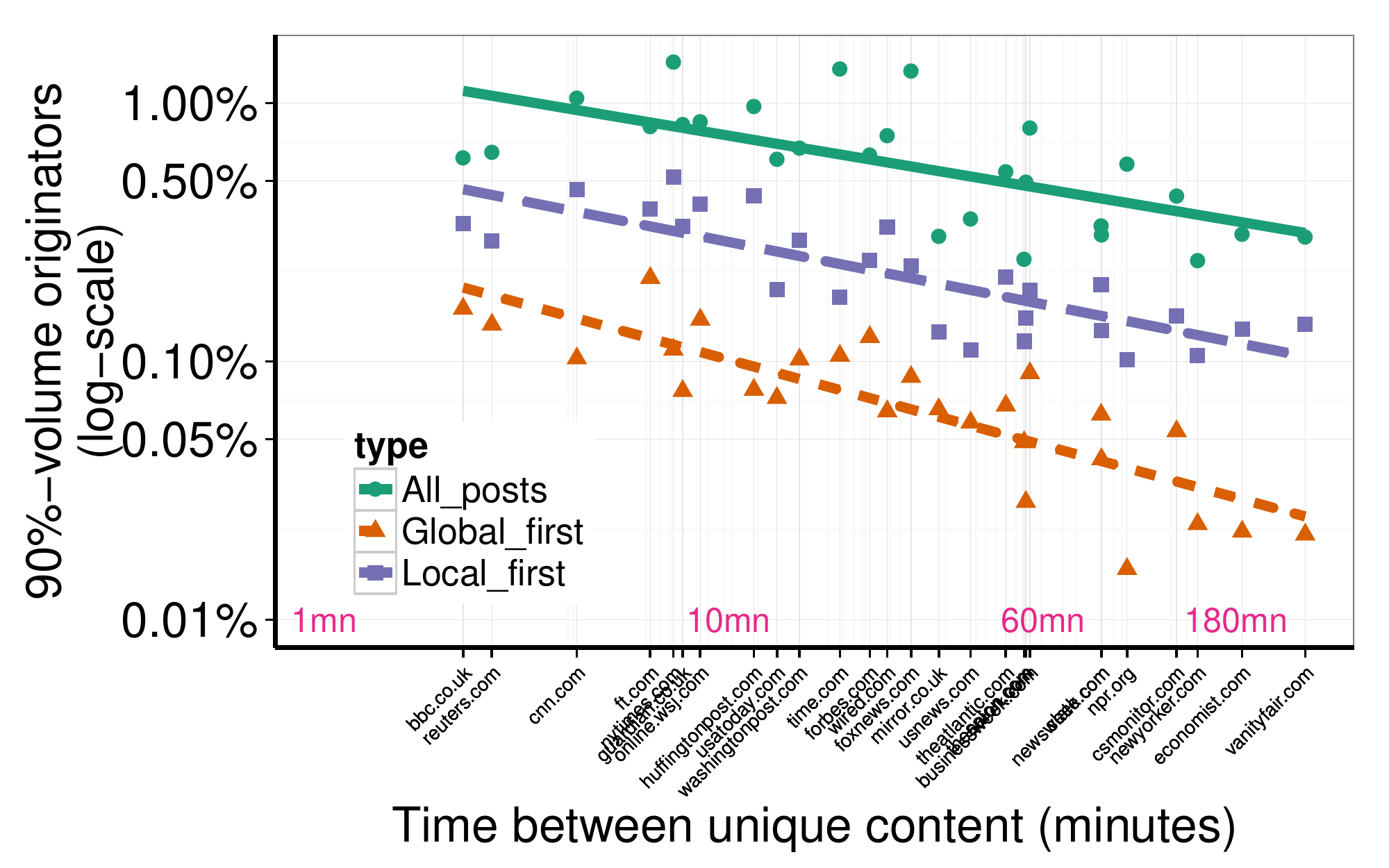}}
\vspace{-10pt}
\caption{Concentration of sharing compared to the shelf-life for each media source. Each point is the fraction of the audience responsible for 90\% of the tweet volume of the media source.}
\label{fig:kaist_shelflife}
\end{figure*}
Our main observation is as follows: the degree of specialization is related to the temporal dynamics of the content, with remarkable regularity. In the same time period, more new content is introduced by \texttt{nytimes.com}, indicating that the content becomes stale quicker than for \texttt{atlantic.com}. This is consistent with \texttt{nytimes.com} being a daily news source. For every media, we measure its average \emph{shelf life} by using the number of unique URLs produced over a month. We define the shelf life of an article to be the amount of time for which it is relevant \ie it continues to be shared among users. This captures the fact that, since all media compete for attention within the same online network, one producing ten times more content expects the content to be renewed ten times faster. \F\ref{fig:kaist_shelflife} shows the 90\%-volume originator (\ie the percentage of the audience producing 90\% of tweet volumes) for 31 media sources. There is a fairly large range of shelf life from approximately 2 minutes to over 2 hours. However, we consistently observe that domains with long shelf times tend involve a smaller fraction of the population to produce most of the content. Note that the x- and y-axis are in logscale. This temporal dynamics affects all tweets and original content similarly. After renormalization, this seems not be affected much by audience size, although we did observe the smaller effect of the fraction of active users grows slowly with the audience.

We also examined the effect of different measures of shelf lives in \F\ref{fig:kaist_shelflife_perarticle}. We calculate the diffusion life as the length of time that the article is shared (time of last post - time of first post). The y-axis is a measure of concentration, fraction of locally first posts of the total number of people receiving the article. We normalized by the number of users posting the article, in order to better account for larger cascades. Other measures of concentration, such as the fraction of first local posts by the total number of posts of an article, also exhibit similar trends, albeit in a more muted fashion. We continue to see the trend of articles with longer shelf lives tend to be more concentrated in sharing.

\begin{figure}[!ht]
\centering \subfloat{\includegraphics[width=0.4\textwidth]{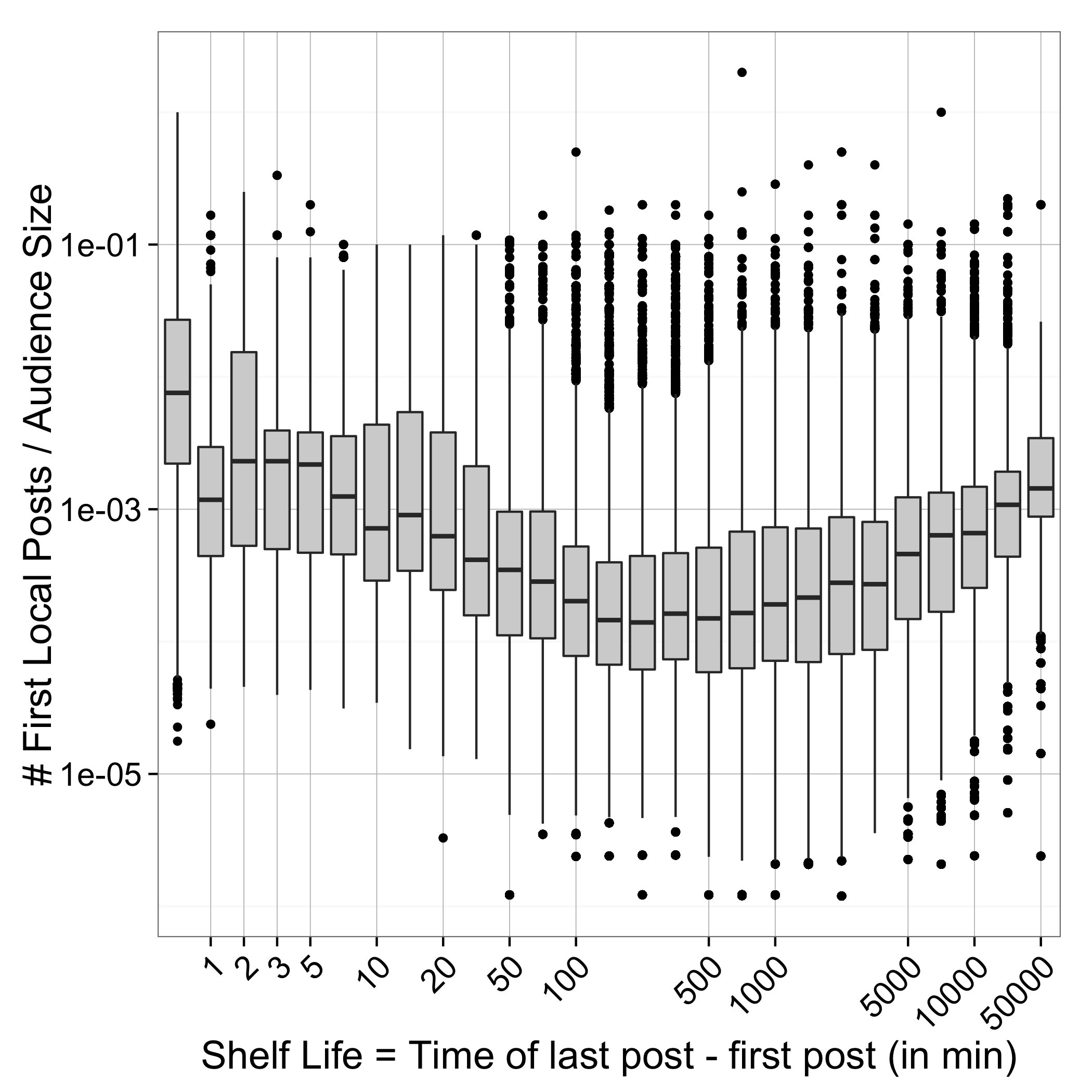}}
\vspace{-10pt}
\caption{Concentration of sharing compared to the diffusion-life for each article. Each article's diffusionlife is the total active time (in minutes) of the article.}
 \vspace{-15pt}
\label{fig:kaist_shelflife_perarticle}
\end{figure}


In summary, we have made several observations. (1) The presence of specialization where a small number of individuals are responsible for most of the original content produced on Twitter. (2) These individuals who produce most of the original content are not, as expected at first glance, the most well connected or the highest degree nodes. Rather, they are average-degree nodes in the network. (3) There is a correlation between the shelf life of an article, the time for which it is relevant, and the degree of specialization. To the best of our knowledge, there does not exist a previous model with reproduces these characteristics. In the following section, we present an idealized model which retains the flavor of the problem of information search.

\section{Perishable public goods model}
\label{sec:model}

While information diffusion on social media is complex and topic dependent, our goal in this section is to provide a simple model with which previous observations of information acquisition can be predicted. We leverage the economic theory of public goods -- goods that are non-rivalrous where use by one individual does not reduce availability to others. In fact, in many public goods models, the ownership of the good by on individual has an impact on the utility of his neighbors. Further, we consider news as a perishable good, i.e. a good that needs to be used within a short period of time and bought again (such as milk or produce). While news does not spoil in the same sense as produce does, the value of news does decrease with time due to updated information and later events occurring. In both cases, since the product is short-lived and the demand is persistent, there is a time dynamic to renew it.


\subsection{A Public Good Approach to Original Content Production}

As content online is vast and not easy to navigate, we assume that player $i$ seeks knowledge at a given rate. This results in content being discovered by her at random times with an intensity $y_i$, forming a Poisson process of discovery times. The effort of that user to individually achieve a discovery rate $y_i$ has a convex cost $c(y_i)$. This captures the fact that as more effort is exerted, or time is invested, worthwhile information becomes rare and harder to find. The utility of information is represented as being in an informed state. In this state, a user has an additional unit of return compared to being uninformed. Upon a discovery, a user remains in the informed state for a time $\tau$ equal to the shelf time of this item. We assume $\tau$ is a constant. 

There is a social component to the interaction: users make the results of their work available to neighbors in a social network graph. We denote the adjacency matrix of the social network as $G=(V,E)$ and it can either be undirected (\eg Facebook) or directed (\eg Twitter). Without loss of generality, we assume that the effort of a user only affects its direct neighbors. The general case simply requires redefining neighboring relations to include future descendents.

Let us denote $y_{-i}$ = $\sum_{j\in N(i)}{y_j}$ as the rate of content discovery that a user $i$ in the network receives at no cost from her neighbors. Then, including her own effort cost $c(y_i)$, the average utility received per unit of time can be written as:
$$U(y_i, y_{-i}) = 1 - e^{-\tau(y_i+y_{-i})} - c(y_i) 
	\;.$$
At time $t=T$, the probability to have received one content item within $]T-\tau;T]$ is the probability that a Poisson process of rate $(y_i+y_{-i})$ creates no point in that interval.

Note here, that discovering multiple content simultaneously creates no additional benefit to the user since the user is already in the informed state. Note also that having content items of various shelf-lives would result in the same dynamics as long as those durations are chosen independently of the discovery process. Finally, while most of the properties of the model we show generalizes to general convex cost, we are primarily interested in polynomial cost $(c:y_i \mapsto\frac{\theta}{\alpha+1} y_i^{\alpha+1}), \alpha>0$. 
We can think of $\theta$ as the reference time period. A reward of $1$ is equivalent to the effort spent to produce content once every $\theta$ time. In this work, we assume, in general, that the cost is normalized such that $\theta=1$hr. This means that the reward exactly compensates for the search effort incurred to produce original content every hour. More general models, especially ones with heterogeneous costs and a matrix of benefits transfer between users, are likely to perfect realism of this model, but we leave them for future work.

\subsection{Best Response}

We first analyze a single individual response of a player to her neighbors' efforts. Even with non-linear dynamics is non-linear, we can represent this best response action in a simple closed form.
\begin{theorem}
For a node, $i$, of $G=(V,E)$, the best response to $i$'s neighbors' efforts, $y_{-i}$, is given by \\ $\phi(y_{-i},\tau) = \frac{\alpha}{\tau}W(\frac{\tau^{\frac{\alpha+1}{\alpha}}}{\alpha} e^{-\frac{\tau y_{-i}}{\alpha}})$, where $W$ is the Lambert function defined on $[0;\infty[$ as the inverse of the function $x\mapsto x\exp(x)$.
\end{theorem}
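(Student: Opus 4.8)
The plan is to obtain the best response directly from the first-order optimality condition for the utility $U(y_i,y_{-i}) = 1 - e^{-\tau(y_i+y_{-i})} - \frac{1}{\alpha+1}y_i^{\alpha+1}$ viewed as a function of the single variable $y_i \ge 0$ (with $\theta=1$), and then to massage the resulting transcendental equation into the canonical form $u e^{u}=w$, whose solution on the relevant range is $u=W(w)$.

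First I would check that the maximizer is unique and interior, so that stationarity is both necessary and sufficient. Differentiating twice in $y_i$ gives $\partial^2_{y_i}U = -\tau^2 e^{-\tau(y_i+y_{-i})} - \alpha\, y_i^{\alpha-1}$, which is strictly negative for every $y_i>0$ and every $\alpha>0$ (in the case $0<\alpha<1$ the cost term even blows up at the origin), so $U(\cdot,y_{-i})$ is strictly concave on $(0,\infty)$. Since $\partial_{y_i}U\big|_{y_i=0} = \tau e^{-\tau y_{-i}} > 0$, the optimum is never attained at the boundary $y_i=0$. Hence the best response is the unique solution of $\tau e^{-\tau(y_i+y_{-i})} = y_i^{\alpha}$.

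Next I would solve this equation. Raising both sides to the power $1/\alpha$ and rearranging isolates $y_i\, e^{\tau y_i/\alpha} = \tau^{1/\alpha} e^{-\tau y_{-i}/\alpha}$; multiplying through by $\tau/\alpha$ turns the left-hand side into $u e^{u}$ with $u = \tau y_i/\alpha$, while the right-hand side becomes $\frac{\tau^{\frac{\alpha+1}{\alpha}}}{\alpha} e^{-\frac{\tau y_{-i}}{\alpha}}$. This quantity is strictly positive, hence it lies in $[0,\infty)$, where $x\mapsto x e^{x}$ is a bijection onto $[0,\infty)$ and $W$ is therefore a well-defined single-valued inverse; inverting yields $u = W\!\left(\frac{\tau^{\frac{\alpha+1}{\alpha}}}{\alpha} e^{-\frac{\tau y_{-i}}{\alpha}}\right)$, i.e. $y_i = \phi(y_{-i},\tau) = \frac{\alpha}{\tau} W\!\left(\frac{\tau^{\frac{\alpha+1}{\alpha}}}{\alpha} e^{-\frac{\tau y_{-i}}{\alpha}}\right)$, as claimed.

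There is no substantial obstacle here — the statement is essentially a calculus exercise — so the only points that need care are the bookkeeping ones: verifying strict concavity uniformly in $\alpha>0$ (which makes the first-order condition sufficient and the solution unique) and confirming that the argument of $W$ always falls on the branch $[0,\infty)$ where the Lambert function is single-valued, which is precisely what makes $\phi$ a genuine function rather than a correspondence.
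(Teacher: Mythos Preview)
Your proof is correct and follows the same approach as the paper: set the first-order condition $\tau e^{-\tau(y_i+y_{-i})}=y_i^{\alpha}$ and invert via the Lambert $W$ function. You add rigor the paper omits---strict concavity to justify sufficiency of the first-order condition, positivity of $\partial_{y_i}U$ at $y_i=0$ to rule out a boundary optimum, and verification that the argument of $W$ lies on the principal branch---but the core argument is identical.
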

\begin{proof}
For an individual, $i$, their best response to their neighbors efforts occurs when $i$'s utility is maximized w.r.t. the amount of effort $i$ invests, $y_i$.
\[
\max_{y_i}{U(y_i, y_{-i})} \;\textrm{s.t.}\; y_i \ge 0 
\;\textrm{, \ie ,}\;
\frac{\partial U(y_i, y_{-i})}{\partial y_i} =  0\;.
\]
This yields $\tau e^{-\tau(y_i+y_{-i})} - y_i^\alpha = 0$. \\ Hence 
$\tau y_i = \alpha W(\frac{\tau^\frac{\alpha+1}{\alpha}}{\alpha} e^{-\frac{\tau y_{-i}}{\alpha}}) $
where $W$ denotes the Lambert function, which proves the result.
\end{proof}

The Lambert function $W$ (\F\ref{fig:lambert}) is a positive increasing function, that is asymptotically equivalent to the identity near $0$ and comes within a negligible distance of the function $x\mapsto \ln(x)-\ln\ln(x)$ as $x$ becomes large. The last two decades has found numerous applications of this function to differential equation, combinatorics, theoretical physics and others. Its computation, both through formal calculus and numerical approximation can be done fast.

Our closed form implies the bound for any $y : 
0=\lim_{x\rightarrow \infty}{\phi(x)}
\leq 
\phi(y)
\leq 
\phi(0)=\frac{\alpha}{\tau}W(\tau^\frac{\alpha+1}{\alpha}) 
\;.$

\begin{figure}[!ht]
\vspace{-10pt}
\subfloat{\includegraphics[width=.245\textwidth]{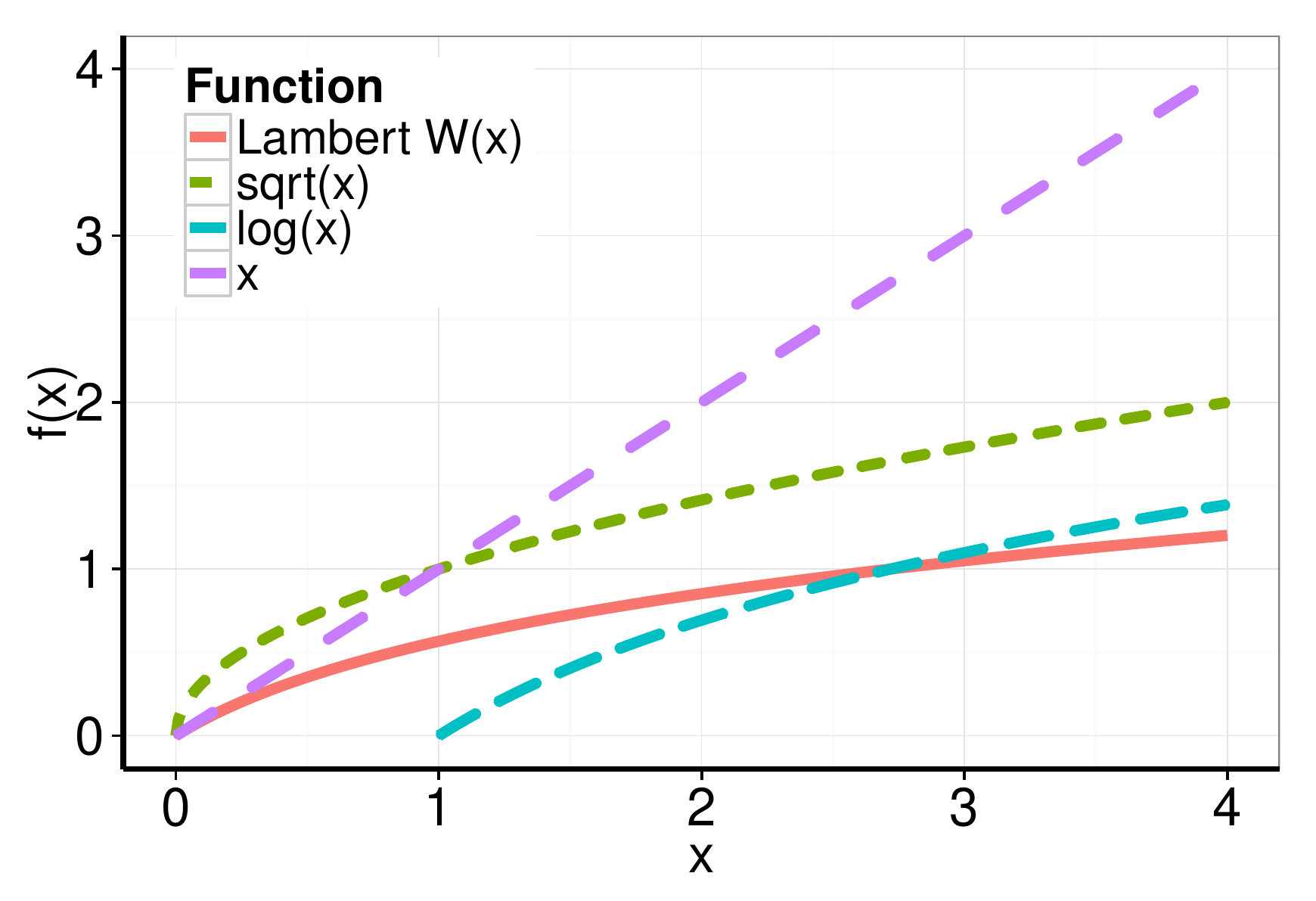}}
\subfloat{\includegraphics[width=.245\textwidth]{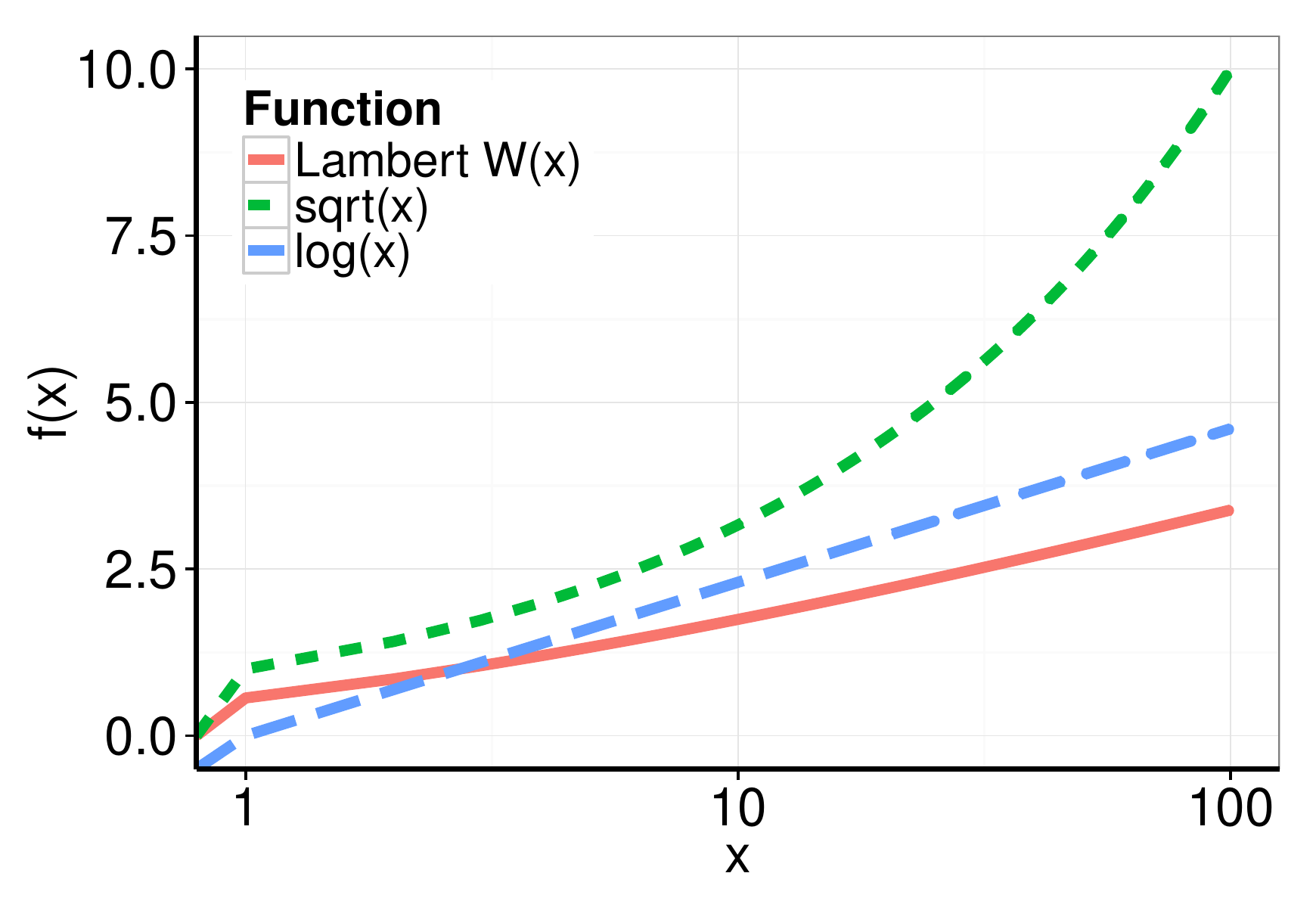}}
\vspace{-15pt}
\caption{Comparison of the Lambert function ($W(z)$ where $z = W(z)e^{W(z)}$) to the common function of $x$, $\log{(x)}$, and $\sqrt{x}$ in the range (left) [0,4] and (right) [1,100]. }
\label{fig:lambert}
\end{figure}

\subsection{Nash Equilibrium}
We initially focus on analyzing the Nash equilibrium in symmetric graphs. 
\begin{definition} 
A graph $G$ is symmetric if, given any two pairs of edges $(u_1, v_1)$ and $(u_2, v_2)$ of $G$, there is an automorphism $f : V (G) \rightarrow V (G)$ such that $f(u_1) = u_2$ and $f(v_1) = v_2$.
\end{definition}
In a symmetric graph, in a unique Nash Equilibrium, all nodes exert the same amount of effort. Observe that if this were not the case, a transformation of the graph results in another equilibrium. 
\begin{lemma} 
For a $D$-regular graph, a symmetric Nash Equilibrium always exists and is given by $$y_i=\frac{\alpha}{\tau (1+D)}W(\tau^\frac{\alpha+1}{\alpha} \frac{(1+D)}{\alpha}), \forall i.$$
\label{lem:symm_ne}
\end{lemma}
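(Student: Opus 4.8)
The plan is to reduce the search for a symmetric equilibrium on a $D$-regular graph to a single scalar fixed-point equation and then solve that equation in closed form using the best-response map $\phi$ from the preceding Theorem. First I would observe that if every node plays the same effort $y^\star$, then by $D$-regularity each node $i$ sees exactly $y_{-i} = D y^\star$ from its neighbors, so the profile is a Nash equilibrium if and only if $y^\star = \phi(D y^\star, \tau)$. Since $U(\cdot\,, y_{-i})$ is strictly concave in the player's own effort (its second derivative $-\tau^2 e^{-\tau(y_i+y_{-i})} - \alpha y_i^{\alpha-1}$ is negative for $y_i > 0$) and the interior stationary point $\phi(y_{-i},\tau)$ is always strictly positive (the argument of the Lambert function is positive, and $W > 0$ there), the constraint $y_i \ge 0$ is slack and $\phi$ is the \emph{unique} maximizer; hence the equilibrium condition is exactly the stationarity equation $\tau e^{-\tau(1+D)y^\star} - (y^\star)^\alpha = 0$ obtained by substituting $y_i = y^\star$, $y_{-i} = D y^\star$ into the first-order condition derived in the proof of the Theorem.

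Next I would solve this equation explicitly, paralleling the Lambert-function manipulation in the Theorem. Raising both sides to the power $1/\alpha$ gives $\tau^{1/\alpha} e^{-\tau(1+D)y^\star/\alpha} = y^\star$; multiplying through by $\tfrac{\tau(1+D)}{\alpha}\,e^{\tau(1+D)y^\star/\alpha}$ and setting $u = \tfrac{\tau(1+D)}{\alpha}\,y^\star$ turns it into $u e^u = \tau^{\frac{\alpha+1}{\alpha}}\tfrac{1+D}{\alpha}$, whose unique solution on $[0,\infty[$ is $u = W\!\left(\tau^{\frac{\alpha+1}{\alpha}}\tfrac{1+D}{\alpha}\right)$. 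Unwinding the substitution yields precisely $y^\star = \tfrac{\alpha}{\tau(1+D)}\,W\!\left(\tau^{\frac{\alpha+1}{\alpha}}\tfrac{1+D}{\alpha}\right)$, the stated value, and plugging this common value back into every player's utility confirms that no unilateral deviation is profitable, so the symmetric profile is indeed a Nash equilibrium.

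For existence and scalar uniqueness I would note that $g(y) = \tau e^{-\tau(1+D)y} - y^\alpha$ is continuous and strictly decreasing on $[0,\infty[$ with $g(0) = \tau > 0$ and $g(y)\to-\infty$, so it has exactly one positive root; equivalently, the argument $\tau^{\frac{\alpha+1}{\alpha}}\tfrac{1+D}{\alpha}$ is nonnegative, so $W$ of it is well defined and single-valued, which already makes $y^\star$ well defined. I do not expect a serious obstacle: the one point requiring a little care is justifying that the equilibrium condition collapses to the interior first-order condition — i.e.\ strict concavity of $U$ in $y_i$ together with slackness of $y_i \ge 0$ — and, if one wishes to connect this with the informal remark preceding the lemma, observing that automorphism-invariance of the game forces \emph{any} equilibrium on a symmetric graph to coincide with this one; the latter is not needed for the existence statement itself.
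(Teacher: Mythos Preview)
Your proposal is correct and follows essentially the same approach as the paper: reduce to the scalar fixed-point $y^\star=\phi(Dy^\star,\tau)$ via $D$-regularity, then solve the resulting equation $\tau e^{-\tau(1+D)y^\star}=(y^\star)^\alpha$ using the Lambert-$W$ substitution to obtain the stated closed form. The paper's proof is terser---it simply inverts the best-response formula algebraically without your added justification of strict concavity, slackness of the nonnegativity constraint, or the monotonicity argument for existence/uniqueness of the scalar root---but the substance is identical.
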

\ifistr
\begin{proof}
In a symmetric equilibrium, $y_i=y,\forall i \in G $. Also, for a node $i$, $y_{-i}=Dy$.
\begin{align*}
\text{At equilibrium} \quad y &= \frac{\alpha}{\tau} W(\frac{\tau^\frac{\alpha+1}{\alpha}}{\alpha} e^{-\frac{\tau Dy}{\alpha}})  \\
\frac{\tau}{\alpha}y e^{\frac{\tau}{\alpha}y } &=\frac{\tau^{\frac{\alpha+1}{\alpha}}}{\alpha} e^{-\frac{\tau Dy}{\alpha}}\\\
y \frac{\tau}{\alpha}(1 +D) &= W(\frac{\tau^{\frac{\alpha+1}{\alpha}}}{\alpha}(1 +D))\\\
y &= \frac{\alpha}{\tau(1+D)}W(\frac{\tau^{\frac{\alpha+1}{\alpha}}}{\alpha}(1 +D))
\end{align*}
\end{proof}
\fi
\vspace{-5mm}
The case of symmetric graphs is interesting because, as we show in Section \ref{sec:uniq_ne_cond}, this symmetric equilibrium need not always be a unique or stable equilibrium.

\subsection{Model Validation}
Real world graphs are, of course, more complex than the above symmetric graph models. We validate our model on a subset of the NYT graph (a random sample of 10\% of the edges). We use an iterative update method (described in the long version of this paper \cite{Ramach2015}) to find the Nash equilibrium numerically. In these simulations, we used a range of shelf-life times ranging from short ($\tau =1$) to long ($\tau=1000$).

Matching our observations from the KAIST dataset, users with larger degree have less ``information seeking activity''. This is reflected in a smaller amount of effort spent in the Nash Equilibrium. Figure \ref{fig:nytne} (left) shows the correlation of the Nash Equilibrium effort with out-degree of a node ($\tau=0.5$ on a sample of 0.1\% of the NYT graph). Here, we see a very strong relationship between the degree and the amount of effort expended in the Nash Equilibrium. Thus, our model yields predictive power for relation of connection and investment in information search

\begin{figure}[ht!]
\vspace{-10pt}
\subfloat{\includegraphics[width=0.245\textwidth]{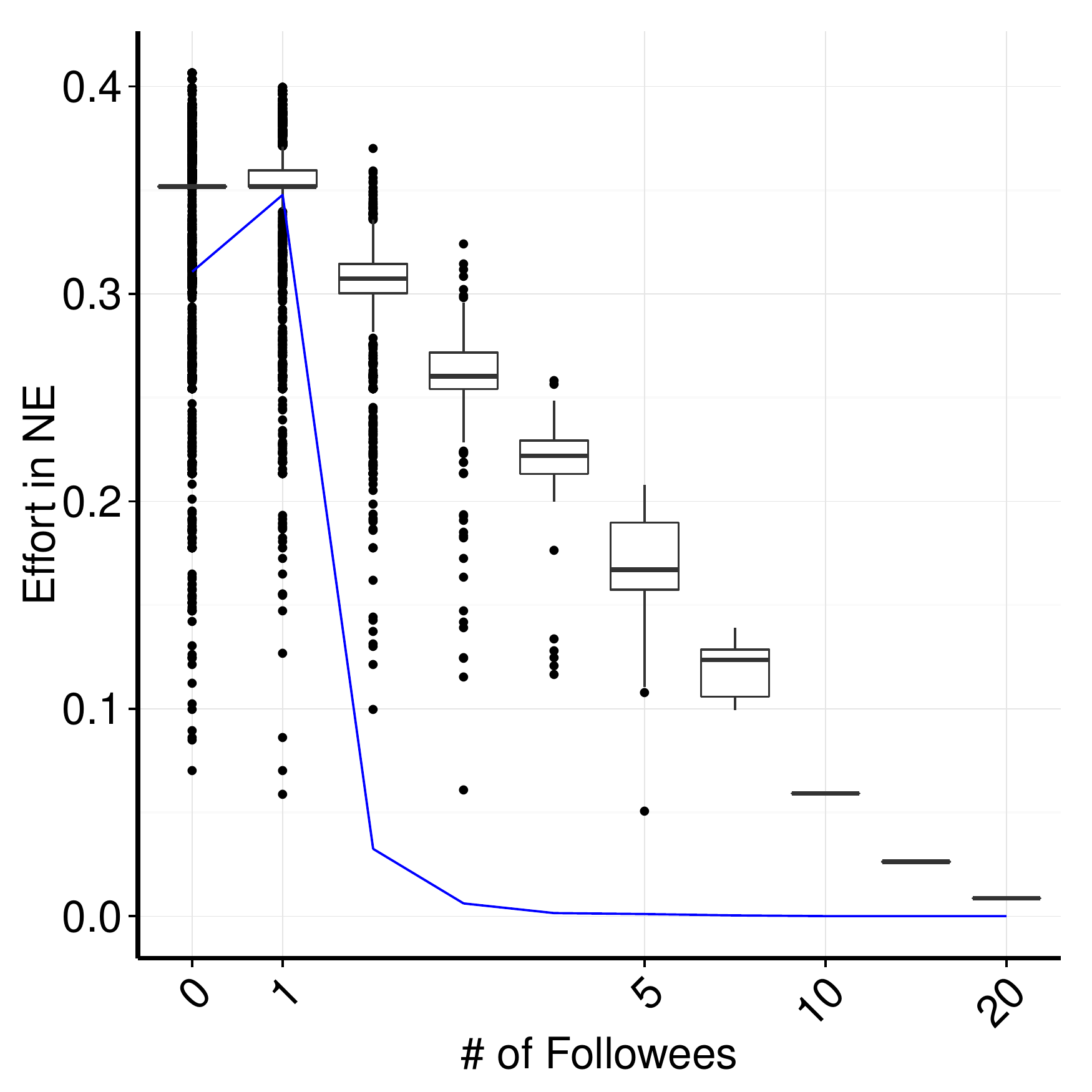}}
\subfloat{\includegraphics[width=0.245\textwidth]{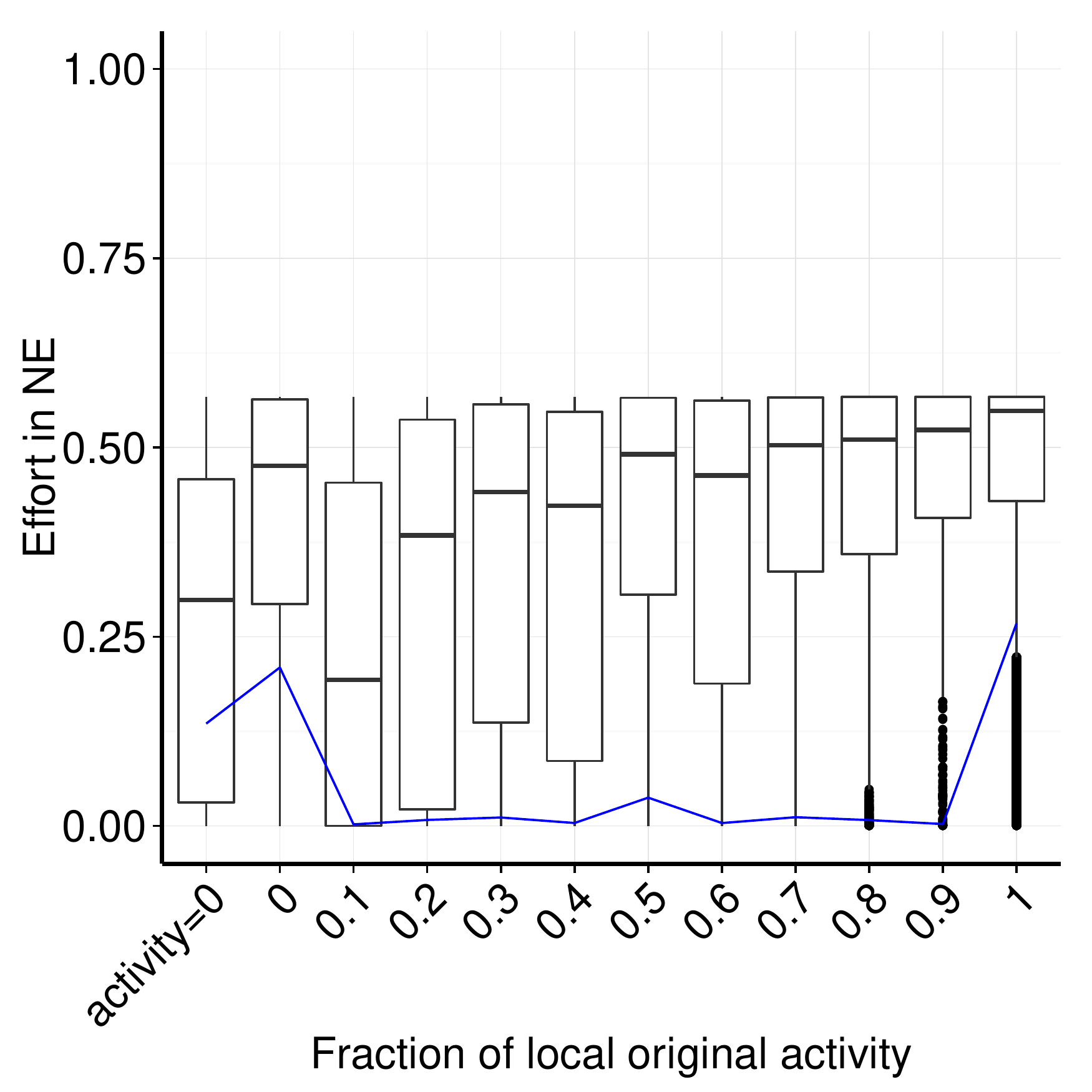}}\\
\vspace{-20pt}
\caption{The Nash Equilibrium (as a function of (left) node degree and (right) fraction of first local activity) in a sample of the NYT graph}
\label{fig:nytne}
\end{figure}

We then observe that the elite in the modeled equilibrium share similar structure to those observed empirically (Figure~\ref{fig:nytshelflife}). A small subset of individuals are responsible for a large fraction of the effort spent -- mimicing the behavior of individuals with original content. 

\begin{figure}[ht!]
\subfloat{\includegraphics[width=0.45\textwidth]{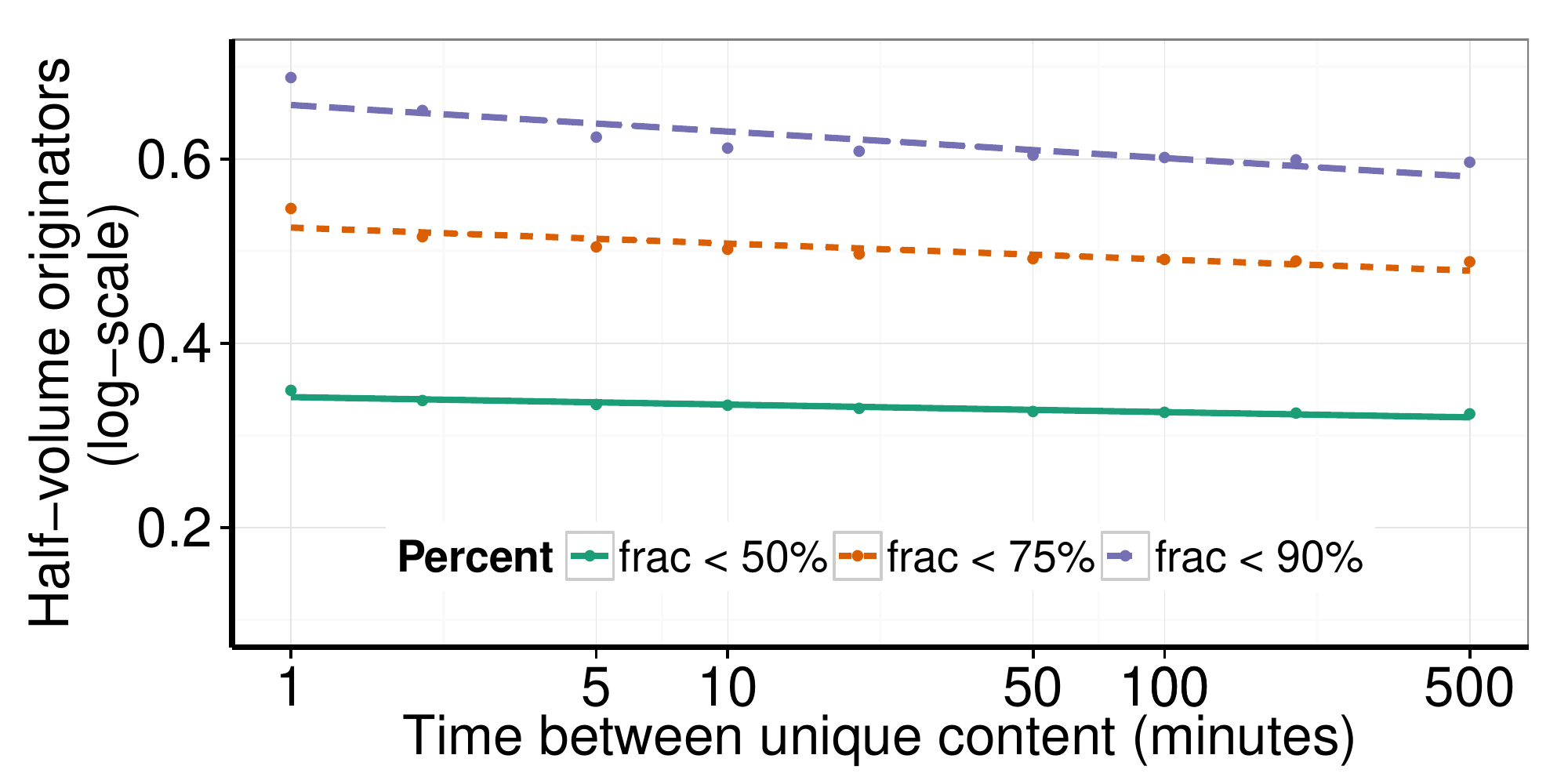}}
\vspace{-10pt}
\caption{Proportion of population responsible for 50\%, 75\% and 90\% of the effort in the Nash Equilibrium in sample of NYT graph.}
\label{fig:nytshelflife}
\end{figure}

Lastly, we examine how the effort in the Nash equilibrium of our model correlates to the fraction of local original activity vs total activity observed in the NYTimes dataset (Figure~\ref{fig:nytne} right). Ideally, we would expect to see perfect correlation since the effort in our model captures exactly this, the effort you spent to bring new content to your neighbors. We see that individuals who in the real world had no effort (the left most group) expend low effort in the Nash Equilibrium. Those who posted at least one article expended more effort and the amount of effort steadily rises.

\section{Equilibrium and Specialization}
\label{sec:analysis}

\subsection{Conditions for a Unique Nash Equilibrium}
\label{sec:uniq_ne_cond}
Different classes of goods exhibit different types of behavior. In economic theory, one of these classifications are that of a \emph{normal good} is a good for which demand increases with increased wealth. Mathematically, if $\gamma:\mathbb{R}_{\ge 0} \rightarrow \mathbb{R}_{\ge 0}$ is a differentiable function representing the income elasticity of demand (the responsiveness of the demand to a change in the income), then the good is normal iff the derivative satisfies $0<\gamma'<1$. 
A \emph{network normal good} carries that idea to a networked case where there is a income elasticity of demand function for each player $i$ in the network. The consumption $\gamma_i$ is defined in terms of the wealth of $i$ (set externally), $w_i$, and $i$'s ``social income'', the income from neighbors of $i$, $y_{-i}$. A network normal good satifies the condition: $1 + \frac{1}{\lambda_{\min}}<\gamma'_i(w_i+y_{-i})<1$ \cite{Allouch:2012do}. We can also express these conditions in terms of the best response $\phi(y_{-i}) =\gamma_i(w_i+y_{-i}) -w_i$ as follows.

\vspace{2mm}
\noindent {\bf Fact.} In the above notation, a good is network normal iff for every player $i$,  $\frac{1}{\lambda_{\min}}<\phi'(y_{-i})<0$.

In our model, there can exist multiple equilibria for the effort that individuals expend. Using network normality conditions, we now give a condition involving the expiration time parameter, $\tau$ under which the Nash equilibrium for the system will be unique. 
\ifistr
\begin{lemma}
$\frac{\partial \phi}{\partial y} = -\frac{W(\tau^2 e^{-\tau y})}{1+W(\tau^2 e^{-\tau y})}$
\label{lem:lambert_deriv}
\end{lemma}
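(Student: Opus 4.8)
The plan is to obtain $\partial\phi/\partial y$ by differentiating the closed form for the best response from the best-response theorem, specialized to the polynomial cost with $\alpha=1$, namely $\phi(y)=\tfrac{1}{\tau}W(\tau^2 e^{-\tau y})$. The one external ingredient needed is the derivative of the Lambert function: differentiating its defining identity $W(z)e^{W(z)}=z$ gives $W'(z)\,e^{W(z)}\bigl(1+W(z)\bigr)=1$, and using $e^{W(z)}=z/W(z)$ this rearranges to $W'(z)=\dfrac{W(z)}{z\,(1+W(z))}$ for $z>0$. Writing $z(y)=\tau^2 e^{-\tau y}$, so that $z'(y)=-\tau\,z(y)$, the chain rule gives
\[
\frac{\partial\phi}{\partial y}=\frac{1}{\tau}\,W'(z)\,z'(y)=\frac{1}{\tau}\cdot\frac{W(z)}{z\,(1+W(z))}\cdot\bigl(-\tau z\bigr)=-\frac{W(z)}{1+W(z)},
\]
and substituting $z=\tau^2 e^{-\tau y}$ back in yields the stated identity.

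An equivalent derivation that I would actually prefer to present, since it avoids quoting $W'$, is to differentiate implicitly the first-order condition from the proof of the best-response theorem. With $\alpha=1$ this condition reads $\tau e^{-\tau(\phi(y)+y)}=\phi(y)$. Differentiating in $y$ gives $-\tau^2\bigl(\phi'(y)+1\bigr)e^{-\tau(\phi(y)+y)}=\phi'(y)$; using the first-order condition itself to replace $\tau e^{-\tau(\phi(y)+y)}$ by $\phi(y)$ turns this into $-\tau\phi(y)\bigl(\phi'(y)+1\bigr)=\phi'(y)$, i.e.\ $\phi'(y)\bigl(1+\tau\phi(y)\bigr)=-\tau\phi(y)$. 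Solving for $\phi'(y)$ and then substituting $\tau\phi(y)=W(\tau^2 e^{-\tau y})$ (again from the closed form) gives $\partial\phi/\partial y=-W(\tau^2 e^{-\tau y})/\bigl(1+W(\tau^2 e^{-\tau y})\bigr)$.

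There is no genuine obstacle here; the computation is short and mechanical. The only step that takes a moment of thought is the cancellation: recognizing that the exponential factor produced by differentiation is precisely the quantity that the definition of $W$ (equivalently, the first-order condition) lets one rewrite, which is what collapses the expression to the clean ratio. As a consistency check I would note that since $W\ge 0$ on $[0,\infty[$ the right-hand side lies in $(-1,0]$, matching both the fact that $\phi$ is non-increasing and the network-normal-good bound $\tfrac{1}{\lambda_{\min}}<\phi'(y)<0$ invoked immediately after this lemma.
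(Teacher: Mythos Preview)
Your first derivation is essentially identical to the paper's proof: apply the chain rule to $\phi(y)=\tfrac{1}{\tau}W(\tau^2 e^{-\tau y})$ using the standard formula $W'(z)=W(z)/\bigl(z(1+W(z))\bigr)$ and watch the factor of $z=\tau^2 e^{-\tau y}$ cancel. Your alternative via implicit differentiation of the first-order condition is a pleasant variant that avoids quoting $W'$, but it is not the route the paper takes.
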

\begin{proof}
\begin{align*}
\frac{\partial \phi}{\partial y} &= \frac{\partial \big(\frac{1}{\tau}W(\tau^2e^{-\tau y})\big)}{\partial y}\\
&=\frac{1}{\tau}\cdot \tau^2\cdot (-\tau)\cdot e^{-\tau y}\cdot W'(\tau^2 e^{-\tau y})\\
&=-\tau^2 e^{-\tau y}\cdot \frac{W(\tau^2 e^{-\tau y})}{\tau^2 e^{-\tau y}(1+W(\tau^2 e^{-\tau y}))}\\
&=-\frac{W(\tau^2 e^{-\tau y})}{1+W(\tau^2 e^{-\tau y})}
\end{align*}
\end{proof}
\fi

\begin{theorem} \ifistr [Short-Lived Content Exhibits Less Specialization] \fi
Let $\lambda_{\min}$ be the minimum eigenvalue of the adjacency matrix of the network, $G=(V,E)$, and let $\tau$ be the expiration time parameter of the system. Then, a unique Nash Equilibrium exists if 
$$\tau < \hat{\tau} =^{\text{def}} \big(\frac{\alpha}{-\lambda_{\min} -1}\big)^\frac{\alpha}{\alpha+1}e^{\frac{\alpha}{(\alpha+1)(-\lambda_{\min} -1)}}\;.$$
\end{theorem}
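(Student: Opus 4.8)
The plan is to reduce the whole statement to the network‑normality criterion recalled just before it: by the cited result of Allouch \cite{Allouch:2012do}, a network normal good has a unique Nash equilibrium, and by the ``Fact'' above this holds as soon as the best‑response map satisfies $\frac{1}{\lambda_{\min}}<\phi'(y_{-i})<0$ at every player. So it suffices to show that the hypothesis $\tau<\hat\tau$ forces this two‑sided derivative bound, and in fact to verify it uniformly over the whole feasible range $y_{-i}\ge 0$ (the bound is worst at $y_{-i}=0$, so checking there is enough).

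First I would differentiate the closed form of Theorem~1. Writing $z(y)=\frac{\tau^{(\alpha+1)/\alpha}}{\alpha}e^{-\tau y/\alpha}$, so that $z'(y)=-\frac{\tau}{\alpha}z(y)$, and using the standard Lambert identity $W'(z)=\frac{W(z)}{z(1+W(z))}$, the chain rule applied to $\phi(y,\tau)=\frac{\alpha}{\tau}W(z(y))$ collapses cleanly to
$$\phi'(y)=-\frac{W(z(y))}{1+W(z(y))},$$
the general‑$\alpha$ analogue of Lemma~\ref{lem:lambert_deriv}. Since $W>0$ on $(0,\infty)$ this is strictly negative, so the upper bound $\phi'<0$ is free. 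For the lower bound, because $\lambda_{\min}<0$ the inequality $\phi'(y)>\frac{1}{\lambda_{\min}}$ becomes, after multiplying through by $-1$, the condition $\frac{W(z(y))}{1+W(z(y))}<\frac{1}{-\lambda_{\min}}$; since $u\mapsto\frac{u}{1+u}$ is increasing on $[0,\infty)$ with inverse $v\mapsto\frac{v}{1-v}$, and $-\lambda_{\min}\ge 1$ for any graph with an edge, this is in turn equivalent to $W(z(y))<\frac{1}{-\lambda_{\min}-1}$ (the boundary case $\lambda_{\min}=-1$ being automatic, consistent with $\hat\tau=+\infty$).

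Next I would make this uniform in $y$ and unwind it into a threshold on $\tau$. Since $y\ge 0$ gives $z(y)\le z(0)=\frac{\tau^{(\alpha+1)/\alpha}}{\alpha}$ and $W$ is increasing, it is enough that $W(\frac{\tau^{(\alpha+1)/\alpha}}{\alpha})<\frac{1}{-\lambda_{\min}-1}$. Applying the inverse $x\mapsto xe^x$ of $W$ to both sides and isolating $\tau$ produces exactly $\tau<(\frac{\alpha}{-\lambda_{\min}-1})^{\alpha/(\alpha+1)}e^{\alpha/((\alpha+1)(-\lambda_{\min}-1))}=\hat\tau$. Hence under $\tau<\hat\tau$ the network‑normality inequalities hold at every node, and the uniqueness conclusion follows from \cite{Allouch:2012do}.

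The calculus here is routine; the points that need care are (i) the sign bookkeeping, namely the flip coming from $\lambda_{\min}<0$ and inverting $u/(1+u)$ on the correct branch $[0,1)$, together with the observation $\lambda_{\min}\le -1$ that makes $\hat\tau$ well defined; and (ii) checking that it genuinely suffices to bound $\phi'$ at the extreme input $y_{-i}=0$, i.e.\ that the requirement really is a uniform derivative bound over all nonnegative neighbor efforts rather than a bound evaluated at the (a priori unknown) equilibrium profile. The only non‑elementary ingredient, and therefore the main obstacle, is the black‑box uniqueness theorem for network normal goods: the rest is elementary manipulation of the Lambert function.
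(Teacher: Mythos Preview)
Your proposal is correct and follows essentially the same route as the paper: reduce to Allouch's network-normality criterion, compute $\phi'(y)=-\frac{W(z(y))}{1+W(z(y))}$, observe that the worst case is at $y=0$, and invert the resulting inequality $W\!\big(\tfrac{\tau^{(\alpha+1)/\alpha}}{\alpha}\big)<\tfrac{1}{-\lambda_{\min}-1}$ via $x\mapsto xe^{x}$ to recover the threshold $\hat\tau$. Your write-up is in fact more careful than the paper's about the sign bookkeeping, the degenerate case $\lambda_{\min}=-1$, and the justification that it suffices to check the derivative bound at $y_{-i}=0$.
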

\begin{proof}
We will prove the theorem by using the previously established connection between network normality of the system and the existence of a unique Nash equilibrium \cite{Allouch:2012do, Bramoulle:2007va, Bramoulle:2010uj, Bergstrom:1986ej}. Hence we only need to show that the network normal conditions hold under the assumptions of the theorem.

We will show that the condition holds for every player, $i$. For ease of notation, let $\phi=\phi_i$ and $x=y_{-i}$.  

 Observe that since $W$ is an increasing function, we have $\phi'(x)$ is a non-decreasing function. Hence the derivative only takes values in $[\phi'(0), \lim_{x \rightarrow \infty}\phi'(x)]=[-\frac{\tau^\frac{\alpha+1}{\alpha}}{\alpha}W'(\frac{\tau^\frac{\alpha+1}{\alpha}}{\alpha}), 0]$. Now, the network normality condition simplifies to verifying $$ \frac{1}{\lambda_{\min}(G)} < -\frac{\tau^\frac{\alpha+1}{\alpha}}{\alpha}W'(\frac{\tau^\frac{\alpha+1}{\alpha}}{\alpha}) < 0.$$
Simplifying the first inequality, we get:
\begin{align*}
\tau&< \big(\frac{\alpha}{-\lambda_{\min} -1}\big)^\frac{\alpha}{\alpha+1}e^{\frac{\alpha}{(\alpha+1)(-\lambda_{\min} -1)}}=\hat{\tau}
\end{align*}
Thus, the network normality conditions holds and a unique Nash equilibrium exists for any $\tau<\hat{\tau}$.
\end{proof}

\ifistr
Here, we show the conditions necessary for a unique Nash Equilibrium to exist for various graph families. Let $\lambda_{\min}$ be the minimum eigenvalue of the adjacency matrix of the network, $G=(V,E)$, and let $\tau$ be the expiration time parameter of the system. Then, a unique Nash Equilibrium exists if $$\tau < \hat{\tau} =^{\text{def}} \big(\frac{1}{-\lambda_{\min} -1}\big)^\frac{\alpha}{\alpha+1}e^{\frac{\alpha}{(\alpha+1)(-\lambda_{\min} -1)}}\;.$$ 

\begin{lemma}
A complete graph always has a unique Nash equilibrium
\end{lemma}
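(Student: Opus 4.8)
The plan is to obtain the statement from the network‑normality criterion invoked in the previous theorem, but to treat directly the degenerate case that makes the bound $\hat\tau$ blow up. First I would compute the spectrum of the adjacency matrix of the complete graph $K_n$: writing it as $J-I$ with $J$ the all‑ones matrix, its eigenvalues are $n-1$ (eigenvector $\mathbf{1}$) and $-1$ with multiplicity $n-1$ (on the orthogonal complement of $\mathbf{1}$). Hence $\lambda_{\min}(K_n)=-1$ for every $n\ge 2$.

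Next I would observe that the formula $\hat\tau=\big(\frac{\alpha}{-\lambda_{\min}-1}\big)^{\frac{\alpha}{\alpha+1}}e^{\frac{\alpha}{(\alpha+1)(-\lambda_{\min}-1)}}$ has $-\lambda_{\min}-1$ in the denominator, so it is vacuous (formally $+\infty$) precisely when $\lambda_{\min}=-1$; thus one cannot simply cite $\tau<\hat\tau$. Instead I would return to the network‑normality condition itself, $\frac{1}{\lambda_{\min}}<\phi'(y_{-i})<0$, which for $\lambda_{\min}=-1$ reads $-1<\phi'(y_{-i})<0$, and show this holds for every player and every $\tau>0$. Using the derivative computation in the general‑$\alpha$ form of Lemma~\ref{lem:lambert_deriv}, $\phi'(x)=-u\,W'(u)$ with $u=\frac{\tau^{(\alpha+1)/\alpha}}{\alpha}e^{-\tau x/\alpha}>0$, and the Lambert identity $W'(u)=\frac{W(u)}{u(1+W(u))}$ gives $\phi'(x)=-\frac{W(u)}{1+W(u)}$. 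Since $W(u)>0$ for every $u>0$, the fraction $\frac{W(u)}{1+W(u)}$ lies strictly between $0$ and $1$, so $-1<\phi'(x)<0$ for all $x\ge 0$ and all $\tau>0$. Hence the network‑normality conditions hold unconditionally on $K_n$, and the uniqueness results of \cite{Allouch:2012do,Bramoulle:2007va,Bramoulle:2010uj,Bergstrom:1986ej} yield a unique Nash equilibrium for every value of the shelf time.

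The main obstacle is conceptual rather than computational: one must recognize that the claim is exactly the boundary case of the previous theorem, where the universal range $(-1,0)$ of $\phi'$ — forced by strict positivity of $W$ on $(0,\infty)$ — coincides with the admissible interval $\big(\frac{1}{\lambda_{\min}},0\big)$ at $\lambda_{\min}=-1$, so no restriction on $\tau$ is needed. Once this is seen, the remaining steps (identifying $\lambda_{\min}$, rewriting $\phi'$ via the Lambert identity, and checking the strict inequalities) are routine bookkeeping.
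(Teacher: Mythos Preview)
Your proposal is correct and follows essentially the same approach as the paper: identify $\lambda_{\min}(K_n)=-1$ and conclude that the uniqueness criterion holds for every $\tau$. The paper's own proof is the one-line observation ``$\lambda_{\min}=-1$, hence a unique Nash equilibrium exists for any $\tau$''; your version is simply more careful in handling the $-\lambda_{\min}-1=0$ degeneracy by returning to the underlying network-normality inequality $-1<\phi'(y_{-i})<0$ rather than plugging into the formula for $\hat\tau$.
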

\begin{proof}
In a complete graph, $\lambda_{\min}=-1$. Thus, for any value of $\tau$, there exists a unique Nash equilibrium.
\end{proof}

\begin{lemma}
In a star graph with $n-1$ leaf nodes, a unique Nash equilibrium for $\tau < \hat{\tau}=(\frac{1}{\sqrt{n-1}-1})^{\frac{1}{2}}e^{\frac{1}{2(\sqrt{n-1}-1)}}$,
\end{lemma}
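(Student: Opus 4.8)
The plan is to obtain this as a direct corollary of the preceding ``Short-Lived Content'' theorem, specialized to $\alpha=1$ and to the spectrum of the star. That theorem guarantees a unique Nash equilibrium whenever $\tau<\hat\tau=\big(\tfrac{1}{-\lambda_{\min}-1}\big)^{\alpha/(\alpha+1)}e^{\alpha/((\alpha+1)(-\lambda_{\min}-1))}$, and it imposes no structural hypothesis on $G$ beyond knowledge of $\lambda_{\min}$; in particular it applies verbatim to the star $K_{1,n-1}$ even though that graph is neither regular nor symmetric in the sense of our definition. Setting $\alpha=1$ makes $\tfrac{\alpha}{\alpha+1}=\tfrac12$ and the exponent $\tfrac{\alpha}{(\alpha+1)(\cdot)}=\tfrac{1}{2(\cdot)}$, which is exactly the shape of the claimed bound, so the only quantity left to pin down is the least eigenvalue of the star.

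First I would compute the spectrum of the adjacency matrix $A$ of $K_{1,n-1}$ by exhibiting eigenvectors directly. For a vector that is constant on the leaves, with value $a$ on the center and $b$ on every leaf, the image $Av$ equals $\big((n-1)b,\,a,\dots,a\big)$, so the eigenvalue equations reduce to $(n-1)b=\lambda a$ and $a=\lambda b$, giving $\lambda^2=n-1$ and hence the two eigenvalues $\pm\sqrt{n-1}$. Every vector supported on the leaves whose entries sum to zero (and which is zero on the center) is annihilated by $A$, contributing eigenvalue $0$ with multiplicity $n-2$. These span all $n$ dimensions, so $\lambda_{\min}=-\sqrt{n-1}$. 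Substituting $-\lambda_{\min}-1=\sqrt{n-1}-1$ into $\hat\tau$ with $\alpha=1$ then yields precisely $\hat\tau=\big(\tfrac{1}{\sqrt{n-1}-1}\big)^{1/2}e^{1/(2(\sqrt{n-1}-1))}$, as claimed.

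The one point that needs care --- and the closest thing to an obstacle here --- is the sign of $\sqrt{n-1}-1$: the bound is meaningful, and the displayed $\hat\tau$ is a well-defined positive real, only when $n>2$, i.e.\ the star has at least two leaves. For $n\le 2$ the star degenerates to a single edge (or a single vertex), $\lambda_{\min}\ge -1$, and --- exactly as in the complete-graph lemma --- the network-normality inequality $\tfrac{1}{\lambda_{\min}}<\phi'(\cdot)<0$ holds for \emph{every} $\tau$, so the equilibrium is unique unconditionally. I would state this boundary case explicitly rather than let the formula stand in for it.
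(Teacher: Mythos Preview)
Your proposal is correct and follows essentially the same route as the paper: both arguments are direct corollaries of the ``Short-Lived Content'' theorem obtained by plugging in $\alpha=1$ and the minimum eigenvalue of the star. The only differences are cosmetic---the paper simply cites $\lambda_{\min}$ from a spectral-graph-theory reference rather than exhibiting eigenvectors, and it does not discuss the degenerate case $n\le 2$ that you (rightly) flag.
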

\begin{proof}
In a star graph of size $n$, $\lambda_{\min} =-\sqrt{n}$ (\cite{Brouwer:2012wz}).
\begin{align*}
W(\tau^2) &<\frac{1}{\sqrt{n}-1}\\
\tau^2 &<\frac{1}{\sqrt{n}-1}e^{\frac{1}{\sqrt{n}-1}}
\end{align*}
\end{proof}

\begin{lemma}
An even cycle graph of size $n$ has a unique Nash equilibrium for $\tau<\hat{\tau}=\sqrt{e}$.
\end{lemma}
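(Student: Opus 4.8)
The plan is to invoke the unique-equilibrium criterion established just above: a unique Nash equilibrium exists whenever $\tau < \hat\tau$, which in the normalized case $\alpha = 1$ reads $\tau < \bigl(\frac{1}{-\lambda_{\min}-1}\bigr)^{1/2} e^{\frac{1}{2(-\lambda_{\min}-1)}}$. Consequently the whole task reduces to one spectral computation: identifying $\lambda_{\min}$, the least eigenvalue of the adjacency matrix of the even cycle $C_n$, and substituting it into this formula.

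First I would recall that the adjacency matrix of $C_n$ is the circulant matrix with first row $(0,1,0,\ldots,0,1)$, so its eigenvalues are $\lambda_k = \omega^k + \omega^{-k} = 2\cos(2\pi k/n)$ for $k = 0,1,\ldots,n-1$, where $\omega = e^{2\pi i/n}$, with associated eigenvectors $(1,\omega^k,\omega^{2k},\ldots,\omega^{(n-1)k})$. Since $\cos$ takes values in $[-1,1]$ and equals $-1$ exactly when its argument is an odd multiple of $\pi$, the value $\lambda_k = -2$ is attained iff $2\pi k/n = \pi$, \ie $k = n/2$. Here the hypothesis that $n$ is even is exactly what is needed: it guarantees $k = n/2$ is an integer in $\{0,\ldots,n-1\}$, so $-2$ genuinely lies in the spectrum and therefore $\lambda_{\min}(C_n) = -2$. (For an odd cycle the minimum would instead be $2\cos\bigl(\pi\frac{n-1}{n}\bigr) > -2$, which is why the parity assumption cannot be dropped.)

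It then remains to plug in: $-\lambda_{\min} - 1 = 2 - 1 = 1$, so with $\alpha = 1$ the threshold becomes $\hat\tau = 1^{1/2}\, e^{1/2} = \sqrt{e}$, matching the claimed bound, and the lemma follows from the earlier theorem.

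The hard part is essentially nonexistent — this is a direct specialization of the general criterion — but the one place that deserves care is the eigenvalue computation, specifically observing that the even-cycle assumption is precisely what forces $-2$ into the spectrum; omitting it yields an $n$-dependent threshold strictly larger than $\sqrt{e}$ rather than the clean constant, so the statement as phrased really does rely on $n$ being even.
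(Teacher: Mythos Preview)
Your proposal is correct and follows essentially the same route as the paper: identify $\lambda_{\min}(C_n)=-2$ for the even cycle and substitute into the general threshold formula (with $\alpha=1$) to obtain $\hat\tau=\sqrt{e}$. The only difference is cosmetic---the paper cites the eigenvalue and passes through the equivalent intermediate inequality $W(\tau^2)<1$, whereas you derive the circulant spectrum explicitly and plug directly into the closed-form $\hat\tau$; both arrive at $\sqrt{e}$ by the same mechanism.
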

\begin{proof}
\begin{align*}
\lambda_{\min}&=-2 \quad (\cite{Brouwer:2012wz}).\\
W(\tau^2) &< 1\\
\tau &= \sqrt{e}
\end{align*}
\end{proof}

\begin{lemma}
An odd cycle graph of size $n$ has a unique Nash equilibrium for $\tau<\hat{\tau}=\frac{n}{(n^2 - \pi^2)^{\frac{1}{2}}}e^{\frac{n^2}{2(n^2 - \pi^2)}}$.
\end{lemma}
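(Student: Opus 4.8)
The plan is to invoke the short-lived-content theorem proved above: a unique Nash equilibrium exists whenever $\tau<\hat\tau=\big(\frac{\alpha}{-\lambda_{\min}-1}\big)^{\frac{\alpha}{\alpha+1}}e^{\frac{\alpha}{(\alpha+1)(-\lambda_{\min}-1)}}$, so the whole task reduces to pinning down $\lambda_{\min}$ for the odd cycle $C_n$ and then simplifying the resulting threshold in the normalized regime $\alpha=1$ used in the other cycle and star lemmas. The adjacency matrix of $C_n$ is circulant, with eigenvalues $2\cos(2\pi k/n)$ for $k=0,1,\dots,n-1$ (see \cite{Brouwer:2012wz}). The cosine is smallest when its argument is as close as possible to $\pi$; for odd $n$ the closest one can get is at $k=(n-1)/2$ (or, symmetrically, $(n+1)/2$), giving argument $\pi-\pi/n$, hence $\lambda_{\min}=2\cos(\pi-\pi/n)=-2\cos(\pi/n)$. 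Therefore $-\lambda_{\min}-1=2\cos(\pi/n)-1$, which is strictly positive for every odd $n\ge 5$; the remaining case $n=3$ is $K_3$ and is already covered by the complete-graph lemma, where $\lambda_{\min}=-1$, so $-\lambda_{\min}-1=0$ and uniqueness holds for every $\tau$.

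Substituting $-\lambda_{\min}-1=2\cos(\pi/n)-1$ and $\alpha=1$ into $\hat\tau$ gives the exact threshold $\hat\tau_{\mathrm{exact}}=\big(2\cos(\pi/n)-1\big)^{-1/2}\exp\!\big(\tfrac{1}{2(2\cos(\pi/n)-1)}\big)$. To put this in the quoted closed form I would use the second-order expansion $\cos(\pi/n)=1-\tfrac{\pi^2}{2n^2}+O(n^{-4})$, so that $2\cos(\pi/n)-1=1-\tfrac{\pi^2}{n^2}+O(n^{-4})=\tfrac{n^2-\pi^2}{n^2}+O(n^{-4})$; dropping the higher-order term and substituting yields $\hat\tau=\big(\tfrac{n^2}{n^2-\pi^2}\big)^{1/2}\exp\!\big(\tfrac{n^2}{2(n^2-\pi^2)}\big)=\frac{n}{\sqrt{n^2-\pi^2}}e^{\frac{n^2}{2(n^2-\pi^2)}}$, as stated, and this manifestly decreases to $\sqrt e$ as $n\to\infty$, consistent with the even-cycle lemma.

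The only subtle point, and the real obstacle, is that last step: the quoted expression is the small-angle approximation of $\hat\tau_{\mathrm{exact}}$, not $\hat\tau_{\mathrm{exact}}$ itself, and since $\cos x\ge 1-x^2/2$ the approximation slightly overstates the true threshold (they agree to leading order and in the limit $n\to\infty$). A fully rigorous statement therefore either keeps the exact threshold $\big(2\cos(\pi/n)-1\big)^{-1/2}\exp\!\big(\tfrac{1}{2(2\cos(\pi/n)-1)}\big)$, or replaces the bare leading term by the two-sided bound $1-\tfrac{\pi^2}{n^2}\le 2\cos(\pi/n)-1\le 1-\tfrac{\pi^2}{n^2}+\tfrac{\pi^4}{12n^4}$ and propagates it through the monotonicity of $m\mapsto m^{-1/2}e^{1/(2m)}$ on $m>0$; either way the qualitative conclusion — a unique Nash equilibrium for all $\tau$ below a threshold slightly above $\sqrt e$ and tending to $\sqrt e$ as $n$ grows — is unchanged. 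Everything else is a direct substitution into the already-established theorem.
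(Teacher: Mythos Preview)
Your proof follows the same approach as the paper's: identify $\lambda_{\min}(C_n)=-2\cos(\pi/n)$ via the circulant eigenvalue formula, approximate it by $-2+\pi^2/n^2$ using the second-order Taylor expansion of cosine, and substitute into the general threshold theorem with $\alpha=1$. Your observation that the stated $\hat\tau$ is only the leading-order approximation (and in fact slightly overstates the exact threshold, since $\cos x\ge 1-x^2/2$) is a valid refinement; the paper's own proof uses the same $\approx$ at the Taylor step without commenting on the direction of the error.
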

\begin{proof}
\begin{align*}
\lambda&=2 \cos{\frac{2\pi j}{n}} \quad (\cite{Brouwer:2012wz}) &(j={0,1,...,n-1})\\
\lambda_{\min}&=2\cdot \cos{(\pi-\frac{\pi}{n})} &(j=\frac{n-1}{2})\\
\lambda_{\min}&=-2\cdot \cos{\frac{\pi}{n}} &(\because \cos(\pi-\theta)=-\cos(\theta))\\
\lambda_{\min}&=-2\cdot \sum_{n=0}^{\infty}{(-1)^n\frac{\frac{\pi}{n}^{2n}}{(2n)!}} &(\text{Taylor expansion})\\
\lambda_{\min}&\approx-2\cdot (1-\frac{\pi^2}{2!n^2}+\frac{\pi^4}{4!n^4})\\
\lambda_{\min}&\approx-2 + \frac{\pi^2}{n^2}
\end{align*}
Substituting the value for $\lambda_{\min}$,
\begin{align*}
W(\tau^2) &<\frac{1}{1 - \frac{\pi^2}{n^2}}\\
\tau^2 &<\frac{n^2}{n^2 - \pi^2}e^{\frac{n^2}{n^2 - \pi^2}}\\
\tau &<\frac{n}{(n^2 - \pi^2)^{\frac{1}{2}}}e^{\frac{n^2}{2(n^2 - \pi^2)}}
\end{align*}
\end{proof}

\begin{lemma}
An Erd\"os-Renyi graph with constant $p$ has a unique Nash equilibrium for $\tau < \hat{\tau}=(\frac{1}{2\sqrt{np}-1})^{\frac{1}{2}}e^{\frac{1}{2(2\sqrt{np}-1)}}$.
\end{lemma}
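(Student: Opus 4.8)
The plan is to read the threshold off the general theorem of this section and then supply the spectral data for $G(n,p)$; everything else is the same one-line substitution used in the preceding star and cycle lemmas. Specializing the theorem to polynomial cost with $\alpha=1$ (equivalently, the simplified form restated above), a unique Nash equilibrium is guaranteed whenever $\tau < f(-\lambda_{\min}-1)$, where $f(t) = t^{-1/2}e^{1/(2t)}$, provided $-\lambda_{\min} > 1$ so that $f$ is evaluated on its domain. Since $(\ln f)'(t) = -\tfrac{t+1}{2t^{2}} < 0$ for $t>0$, the function $f$ is strictly decreasing, so \emph{any} upper bound on $-\lambda_{\min}$ yields a valid (sufficient) threshold; the lemma therefore reduces to establishing $-\lambda_{\min}(G) \le 2\sqrt{np}$.

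For the spectral estimate I would invoke the standard description of the adjacency spectrum of a dense Erd\"os-Renyi graph. Decomposing $A = \mathbb{E}A + (A - \mathbb{E}A)$, the mean $\mathbb{E}A$ is essentially rank one and contributes only the Perron eigenvalue $\approx np$ at the top, while $A - \mathbb{E}A$ is a Wigner-type symmetric matrix with independent mean-zero entries of variance $p(1-p)$; by the semicircle law together with concentration of the extreme eigenvalues (F\"uredi--Koml\'os, or sharper later results), its smallest eigenvalue is $-2\sqrt{np(1-p)}\,(1+o(1))$ with high probability. Weyl's inequality transfers this to $\lambda_{\min}(A) = -2\sqrt{np(1-p)}\,(1+o(1))$. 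Because $p$ is a fixed constant with $\sqrt{p(1-p)} < \sqrt{p}$, for all $n$ large enough this bound is below $2\sqrt{np}$, so with high probability $-\lambda_{\min}(G) \le 2\sqrt{np}$, and in particular $-\lambda_{\min}-1 \le 2\sqrt{np}-1$, which is positive.

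Plugging $t = -\lambda_{\min}-1 \le 2\sqrt{np}-1$ into the monotone threshold and using that $f$ decreases gives
\[
\tau < \Big(\tfrac{1}{2\sqrt{np}-1}\Big)^{1/2} e^{1/(2(2\sqrt{np}-1))} = \hat\tau \le \Big(\tfrac{1}{-\lambda_{\min}-1}\Big)^{1/2} e^{1/(2(-\lambda_{\min}-1))},
\]
so the hypothesis of the general theorem is met and the Nash equilibrium is unique, which is the claim.

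The main obstacle is entirely in the spectral input, and specifically in getting the inequality in the right direction: since $\hat\tau$ is decreasing in $-\lambda_{\min}$, an estimate bounding $-\lambda_{\min}$ from below (or a two-sided concentration statement used without taking the correct side) would produce a vacuous or wrong threshold, so the argument must carefully pin $-\lambda_{\min}$ \emph{from above}. A secondary point is choosing the level of rigor --- a with-high-probability asymptotic for $\lambda_{\min}$ of $G(n,p)$ in the spirit of the neighboring deterministic graph lemmas --- and observing that the $(1-p)$ factor may be dropped freely since it only shrinks $-\lambda_{\min}$.
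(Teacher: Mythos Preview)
Your proposal is correct and follows essentially the same route as the paper: both invoke the F\"uredi--Koml\'os bound to get $\lambda_{\min}(G)>-2\sqrt{np}$ (with high probability, for constant $p$) and then substitute into the general threshold for $\alpha=1$. Your write-up is in fact more careful than the paper's, since you explicitly verify that the threshold $f(t)=t^{-1/2}e^{1/(2t)}$ is decreasing---which is exactly what is needed to turn an \emph{upper} bound on $-\lambda_{\min}$ into a sufficient condition---and you note that the variance in F\"uredi--Koml\'os is $p(1-p)$ rather than the paper's $p$, a discrepancy that only helps the inequality.
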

\begin{proof}
For a Erd\"os-Renyi graph, with constant $p$ (\cite{Furedi1981})
\begin{align*}
\lambda_{\min}&=-c\sqrt{n}\\
\max{|\lambda_{\min}|}&=2\sigma \sqrt{n} + O(n^{\frac{1}{3}}\log{n}) & \text{where } \sigma=\sqrt{p}\\
\lambda_{\min}&>-2\sqrt{np}
\end{align*}
Substituting the value for $\lambda_{\min}$,
\begin{align*}
 W(\tau^2) &<\frac{1}{2\sqrt{np}-1}\\
\tau^2 &<\frac{1}{2\sqrt{np}-1}e^{\frac{1}{2\sqrt{np}-1}}
\end{align*}
\end{proof}

\begin{lemma}
A complete bipartite graph of size $n$ has a unique Nash equilibrium for $\tau < \hat{\tau}=(\frac{2}{n-2})^{\frac{1}{2}}e^{\frac{1}{n-2}}$.
\end{lemma}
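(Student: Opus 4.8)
The plan is to obtain the lemma as a direct corollary of the preceding theorem (``Short-Lived Content Exhibits Less Specialization''): that result already establishes, via the network-normality characterization of uniqueness, that a unique Nash equilibrium exists whenever $\tau<\hat\tau$, with $\hat\tau$ a function only of $\alpha$ and the least adjacency eigenvalue $\lambda_{\min}$ of $G$. So the only graph-specific work is to pin down $\lambda_{\min}$ for the balanced complete bipartite graph on $n$ vertices, $G=K_{n/2,n/2}$, and then substitute.

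First I would compute the spectrum. Writing the adjacency matrix of $K_{m,m}$ with $m=n/2$ in block form $\begin{pmatrix} 0 & J \\ J & 0 \end{pmatrix}$, where $J$ is the $m\times m$ all-ones matrix, one gets $A^2=\begin{pmatrix} mJ & 0 \\ 0 & mJ \end{pmatrix}$, so the eigenvalues of $A$ lie in $\{+m,-m,0\}$; since $A$ is bipartite (hence has symmetric spectrum), has zero trace, and has rank $2$, they are exactly $+m$, $-m$, and $0$ with multiplicity $n-2$. (This is the standard fact that $K_{a,b}$ has eigenvalues $\pm\sqrt{ab}$; see \cite{Brouwer:2012wz}, as used in the earlier family-specific lemmas.) Hence $\lambda_{\min}=-m=-n/2$.

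Substituting $\lambda_{\min}=-n/2$ into the criterion and taking $\alpha=1$ as in the companion lemmas, we get $-\lambda_{\min}-1=\tfrac{n-2}{2}$, so uniqueness holds as soon as $W(\tau^2)<\tfrac{2}{n-2}$, i.e.
\[
\tau^2<\frac{2}{n-2}\,e^{2/(n-2)}
\quad\Longleftrightarrow\quad
\tau<\Big(\frac{2}{n-2}\Big)^{1/2}e^{1/(n-2)}=\hat\tau,
\]
which is the asserted bound (and, exactly as in the other lemmas, the generic-$\alpha$ version follows by carrying $\alpha$ through the same substitution).

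There is essentially no obstacle beyond the upstream theorem; the only things worth a second look are the spectral computation above and the degenerate case. For $n=2$, $G=K_{1,1}$ is a single edge with $\lambda_{\min}=-1$, so $-\lambda_{\min}-1=0$ and the condition is vacuously satisfied for all $\tau$, consistently with the complete-graph lemma; for $n>2$ the displayed $\hat\tau$ is a genuine finite threshold. The one genuinely ambiguous point is the phrase ``of size $n$'': the stated $\hat\tau$ forces $-\lambda_{\min}=n/2$, i.e. $\sqrt{ab}=n/2$ with $a+b=n$, which is possible only for the balanced graph $K_{n/2,n/2}$ — so that is the intended reading, and I would state it explicitly at the start of the proof.
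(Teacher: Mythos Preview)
Your proof is correct and follows the same route as the paper: invoke the preceding short-lived-content theorem, plug in $\lambda_{\min}=-n/2$ for the balanced complete bipartite graph, and specialize to $\alpha=1$. You are simply more thorough than the paper (which just states $\lambda_{\min}=-n/2$ without derivation), and your remarks on the $n=2$ degenerate case and on why ``size $n$'' must mean $K_{n/2,n/2}$ are useful clarifications the paper omits.
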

\begin{proof}
\begin{align*}
\lambda_{\min}&=-\frac{n}{2}\\
W(\tau^2) &<\frac{2}{n-2}\\
\tau &<(\frac{2}{n-2})^{\frac{1}{2}}e^{\frac{1}{n-2}}
\end{align*}
\end{proof}
\fi

The quantity $\hat{\tau}$ of $G$ specifies the condition under which a unique Nash equilibrium exists. Table \ref{table:taus} details the value of $\hat{\tau}$ for various regular graphs (\cite{Ramach2015}).
\begin{table}[h]
\caption{Conditions for unique Nash Equilibrium ($\tau < \hat{\tau}$) for graphs with $n$ nodes ($\alpha=1$)
\label{table:taus}}{%
\begin{tabular}{| l | c | c | c |}
\toprule
Graph     & $\lambda_{min}$ & $\hat{\tau}$\\
\midrule
Complete & $-1$ &$\forall \tau$ ($\infty$) \\
Cycle (Even)  & $-2$ & $\sqrt{e}$ \\
Cycle (Odd)   & $-2 + \frac{\pi^2}{n^2}$ & $\frac{n}{(n^2 - \pi^2)^{\frac{1}{2}}}e^{\frac{n^2}{2(n^2 - \pi^2)}}$ \\
Erd\"os-Renyi & $-2\sqrt{np}$ &$(\frac{1}{2\sqrt{np}-1})^{\frac{1}{2}}e^{\frac{1}{2(2\sqrt{np}-1)}}$\\
Star & $-\sqrt{n-1}$ & $(\frac{1}{\sqrt{n-1}-1})^{\frac{1}{2}}e^{\frac{1}{2(\sqrt{n-1}-1)}}$\\
Complete Bipartite& $-\frac{n}{2}$& $(\frac{2}{n-2})^{\frac{1}{2}}e^{\frac{1}{n-2}}$ \\
\bottomrule
\end{tabular}}
\end{table}

Our observations on simple regular graphs give us an understanding of the behavior of the Nash Equilibrium in differnet types of settings. We see that for shorter lived information (content with smaller $\tau$), the process of sharing is relatively straightforward. In most graphs, for small $\tau < \hat{\tau}$, there exists a unique equilibrium. In symmetric graphs, this equilibrium is symmetric. In non-regular graphs, the equilibrium response is inversely related to the degree of a node since higher degree nodes can rely on good quality content through their many neighbors. Conversely, lower degree nodes tend to expend more effort since they have few neighbors that they can free ride on.

In general, more balanced graphs (with larger $\lambda_{\min}$) have less sensitivity to the ephemeral nature of information \ie the conditions for a unique equilibrium encompass a larger range of shelf life values. In more segregated graphs (with smaller $\lambda_{\min}$), the efforts of a few people can be enough for the graph as a whole and the equilibrium is less balanced in nature. 

Understanding the dependencies of the equilibrium in real world graphs is a little more challenging. Since these are not $d$-regular graphs, we do not expect symmetric equilibria to occur. In the case of the real world NYTimes graph, $\lambda_{\min}\approx -70$ (computed with python's sparse matrix package). Considering that the size of the NYTimes graph is $n=346k$ users, this case more closely resembles a balanced graph, like an Erd\"os-Renyi graph. For $\alpha=1$, a case where there is a relatively low cost of finding information, $\hat{\tau}\approx 0.12$ of the reference time period. For $\theta=1$hr (\ie., assuming readers' utility for content roughly compensate an effort to search every hour for new information), $\hat{\tau}\approx 7$min which is close to the empircally estimated shelf life of $\tau = 7.30$ min. 

\subsection{Tuning Shelf Life to Maximize Original Information} 
A media source would want to encourage users to spend more time on their site. Thus, they might be interested in tuning their parameter to maximize user effort. In a disconnected setting, each person is responsible for finding and consuming their own content. In this case, $y_{-i} = 0$ and the best response simplifies to $\phi(0) = \frac{\alpha}{\tau}W(\frac{\tau^\frac{\alpha+1}{\alpha}}{\alpha})$. At the value $\tau=\tau^*$, an individual is incentivitized to expend maximal effort.
\begin{claim} For an isolated node, $i$, the effort is maximized at $\tau^*=e^\frac{1}{\alpha+1}$.
\label{claim:isolated_maxima}
\end{claim}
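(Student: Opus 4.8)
The plan is to sidestep differentiating the Lambert-function expression head-on and instead exploit the implicit first-order condition already derived in the proof of Theorem~1. Setting $y_{-i}=0$ there, the isolated-node effort $y(\tau) := \phi(0,\tau)$ is the unique positive solution of
\[
\tau e^{-\tau y} = y^{\alpha}.
\]
I would first note that $y(\tau)$ is differentiable: applying the implicit function theorem to $F(\tau,y)=\tau e^{-\tau y}-y^{\alpha}$, we have $\partial F/\partial y = -\tau^{2}e^{-\tau y}-\alpha y^{\alpha-1}<0$ throughout, so $y$ is a smooth function of $\tau$ on $(0,\infty)$. Taking logarithms in the first-order condition gives the identity $\ln\tau - \tau y = \alpha\ln y$.

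Next I would differentiate this identity implicitly in $\tau$. Writing $y' = dy/d\tau$, we get $\frac{1}{\tau} - y - \tau y' = \alpha y'/y$, i.e. $y'\bigl(\tau + \alpha/y\bigr) = \frac{1}{\tau} - y$. Since $\tau + \alpha/y>0$, the sign of $y'$ equals the sign of $\frac{1}{\tau}-y$, so any interior critical point of $\tau\mapsto y(\tau)$ must satisfy $y = 1/\tau$. Substituting $y=1/\tau$ back into $\tau e^{-\tau y}=y^{\alpha}$ gives $\tau e^{-1}=\tau^{-\alpha}$, hence $\tau^{\alpha+1}=e$ and $\tau^{*}=e^{1/(\alpha+1)}$ — the claimed value. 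The clean cancellation here is exactly what makes the implicit approach preferable to differentiating $\frac{\alpha}{\tau}W(\tau^{(\alpha+1)/\alpha}/\alpha)$, which would drag in $W'$ and obscure the algebra.

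It then remains to confirm $\tau^{*}$ is a maximizer rather than a minimizer or inflection point, for which I would use the boundary behaviour of $y(\tau)=\frac{\alpha}{\tau}W(\tau^{(\alpha+1)/\alpha}/\alpha)$ recalled after Theorem~1: since $W(x)\sim x$ as $x\to 0$, we get $y(\tau)\sim\tau^{1/\alpha}\to 0$ as $\tau\to 0^{+}$, and since $W(x)=O(\ln x)$ as $x\to\infty$, we get $y(\tau)\to 0$ as $\tau\to\infty$. As $y$ is continuous and strictly positive on $(0,\infty)$ with limit $0$ at both ends, it attains an interior maximum, and the computation above shows $\tau^{*}$ is its only interior critical point, so it is the maximizer. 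I do not expect a genuine obstacle in any step; the only subtlety worth stating explicitly is the differentiability of $y(\tau)$, which the sign of $\partial F/\partial y$ settles immediately.
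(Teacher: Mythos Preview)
Your argument is correct and complete, but it takes a different route from the paper's own proof. The paper differentiates the closed-form expression $\phi(0,\tau)=\frac{\alpha}{\tau}W\!\bigl(\tau^{(\alpha+1)/\alpha}/\alpha\bigr)$ directly in $\tau$, uses the Lambert derivative identity $W'(z)=\frac{W(z)}{z(1+W(z))}$, and after simplification obtains $W\!\bigl(\tau^{(\alpha+1)/\alpha}/\alpha\bigr)=1/\alpha$, from which $\tau^{*}=e^{1/(\alpha+1)}$ follows by inverting $W$; the second-order check is then dismissed as ``easy to verify.'' You instead bypass the Lambert function entirely by differentiating the implicit first-order condition $\tau e^{-\tau y}=y^{\alpha}$ in $\tau$, which immediately gives the critical-point relation $y=1/\tau$ and leads to the same $\tau^{*}$ with less algebra. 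Your approach is more elementary (no $W'$ formula needed) and also more thorough, since you supply the boundary limits $y(\tau)\to 0$ at both ends to certify that the unique critical point is a maximum rather than merely asserting it. The paper's route has the minor advantage that it stays within the Lambert-function formalism used elsewhere in the section, making the intermediate condition $W=1/\alpha$ easy to compare with the analogous condition in Claim~\ref{claim:symm_maxima}.
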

\ifistr
\begin{proof}
The $\tau$ that corresponds to the maximum effort satisfies $\frac{\partial \phi}{\partial \tau}=0$. Further, since $i$ is isolated, $y_{-i}=0$. Hence, 
\begin{align*}
\frac{\partial \phi}{\partial \tau} = \frac{\partial \frac{1}{\tau}W(\frac{\tau^\frac{\alpha+1}{\alpha}}{\alpha})}{\partial \tau}&=0\\
\frac{\alpha}{\tau}\cdot \frac{1}{\alpha}\cdot \frac{\alpha+1}{\alpha}\tau^\frac{1}{\alpha}\cdot W'(\frac{\tau^\frac{\alpha+1}{\alpha}}{\alpha})+W(\frac{\tau^\frac{\alpha+1}{\alpha}}{\alpha})\cdot(-1)\frac{\alpha}{\tau^2} &=0\\
\text{Simplifying,} \quad W(\frac{\tau^\frac{\alpha+1}{\alpha}}{\alpha})&=\frac{1}{\alpha}\\
\tau &=e^\frac{1}{\alpha+1}
\end{align*}
It is easy to verify that this critical point is a maxima.
\end{proof}
\fi
In the case of symmetric graphs, there is always a symmetric equilibrium (Lemma \ref{lem:symm_ne}). We can calculate, for symmetric graphs, the $\tau^*$ that maximizes the amount of effort by any node in a symmetric equilibrium. 

\begin{claim} For an symmetric graph of degree $D$,  the effort in a symmetric equilibrium, $y_i$, is maximized at $\tau^*=\frac{e}{(1+D)^\alpha}^\frac{1}{\alpha+1} $
\label{claim:symm_maxima}
\end{claim}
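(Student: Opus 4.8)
The plan is to reduce this to Claim~\ref{claim:isolated_maxima} by observing that, up to an overall rescaling, the symmetric--equilibrium effort is just the isolated best response. By Lemma~\ref{lem:symm_ne}, in a symmetric graph of degree $D$ the effort at the symmetric equilibrium is
\[
y_i(\tau)\;=\;\frac{\alpha}{\tau(1+D)}\,W\!\big(\tfrac{1+D}{\alpha}\,\tau^{\frac{\alpha+1}{\alpha}}\big)\;.
\]
Introduce the change of variable $\sigma=(1+D)^{\frac{\alpha}{\alpha+1}}\tau$, so that $\tfrac{1+D}{\alpha}\tau^{\frac{\alpha+1}{\alpha}}=\tfrac{1}{\alpha}\sigma^{\frac{\alpha+1}{\alpha}}$ and $\tfrac{\alpha}{\tau}=(1+D)^{\frac{\alpha}{\alpha+1}}\tfrac{\alpha}{\sigma}$. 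Substituting gives $y_i=(1+D)^{-\frac{1}{\alpha+1}}\,\psi(\sigma)$, where $\psi(\sigma)=\tfrac{\alpha}{\sigma}W(\tfrac{1}{\alpha}\sigma^{\frac{\alpha+1}{\alpha}})$ is exactly the isolated best response $\phi(0)$ of Claim~\ref{claim:isolated_maxima}, with $\sigma$ playing the role of the shelf time. Since $\tau\mapsto\sigma$ is an increasing bijection of $(0,\infty)$ and the prefactor $(1+D)^{-1/(\alpha+1)}$ is independent of $\tau$, maximizing $y_i$ over $\tau$ is equivalent to maximizing $\psi(\sigma)$ over $\sigma$.

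By Claim~\ref{claim:isolated_maxima} the latter maximum is attained at $\sigma^{*}=e^{1/(\alpha+1)}$. Undoing the substitution then gives $\tau^{*}=\sigma^{*}(1+D)^{-\alpha/(\alpha+1)}=\big(e/(1+D)^{\alpha}\big)^{1/(\alpha+1)}$, which is the asserted value. Alternatively, one may mimic the direct argument used for Claim~\ref{claim:isolated_maxima}: differentiate $y_i(\tau)$ using the Lambert identity $W'(z)=W(z)/\big(z(1+W(z))\big)$, collect terms, and check that $\partial y_i/\partial\tau=0$ reduces to $W\big(\tfrac{1+D}{\alpha}\tau^{\frac{\alpha+1}{\alpha}}\big)=\tfrac{1}{\alpha}$; since $W$ is a strictly increasing bijection this forces $\tfrac{1+D}{\alpha}\tau^{\frac{\alpha+1}{\alpha}}=\tfrac{1}{\alpha}e^{1/\alpha}$, i.e.\ $\tau^{\frac{\alpha+1}{\alpha}}=e^{1/\alpha}/(1+D)$, and raising to the power $\alpha/(\alpha+1)$ yields $\tau^{*}$.

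I do not expect a genuine obstacle here; the two points needing a line of care are (i) confirming that the unique interior critical point is a \emph{maximum}, and (ii) the exponent bookkeeping when passing between $\tau$ and $\sigma$. For (i) it suffices to note the boundary behaviour of $y_i$: as $\tau\to 0^{+}$ the small-argument asymptotic $W(z)\sim z$ gives $y_i\sim\tau^{1/\alpha}\to 0$, and as $\tau\to\infty$ the logarithmic growth of $W$ against the $1/\tau$ factor gives $y_i=O\big(\tau^{-1}\log\tau\big)\to 0$; with $y_i>0$ on $(0,\infty)$ and a single critical point, that point is the global maximizer. The reduction to Claim~\ref{claim:isolated_maxima} actually makes both (i) and (ii) cost-free, since the monotone map $\tau\mapsto\sigma$ transports the already-established maximality and boundary behaviour of $\phi(0)$ directly to $y_i$; I would therefore present the reduction as the main argument and relegate the direct differentiation to a remark.
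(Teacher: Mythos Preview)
Your proposal is correct. The paper's own proof is the direct-differentiation route you relegate to a remark: it sets $\partial y_i/\partial\tau=0$ for the expression in Lemma~\ref{lem:symm_ne} and reads off the critical $\tau$ (the paper records the intermediate condition as $W\big(\tfrac{1+D}{\alpha}\tau^{(\alpha+1)/\alpha}\big)=1$, which appears to be a slip for $1/\alpha$; your version of the computation is the right one and lands on the stated $\tau^*$). Your \emph{main} argument, the substitution $\sigma=(1+D)^{\alpha/(\alpha+1)}\tau$ that rewrites $y_i(\tau)=(1+D)^{-1/(\alpha+1)}\phi(0)\big|_{\tau\to\sigma}$ and thereby reduces everything to Claim~\ref{claim:isolated_maxima}, is a genuinely different and cleaner route than the paper's. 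What it buys you is exactly what you note: the monotone bijection $\tau\mapsto\sigma$ transports not only the critical point but also the boundary behaviour and the maximum verification from the isolated case for free, so no new calculus or second-order check is needed. The paper's approach is shorter to state but leaves the ``this is a maximum'' step implicit and repeats the Lambert differentiation already done for Claim~\ref{claim:isolated_maxima}.
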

\ifistr
\begin{proof}
Note that $y_{-i}=\frac{\alpha}{\tau(1+D)}W(\frac{\tau^\frac{\alpha+1}{\alpha}}{\alpha}(1+D))$ since $i$ has degree $D$ and the equilibrium is symmetric. 
Again, the $\tau$ that corresponds to the maximum effort satisfies $\frac{\partial \phi}{\partial \tau}=0$. By evaluating these expressions, we get $$W(\frac{\tau^\frac{\alpha+1}{\alpha}}{\alpha}(1+D))=1 \implies \tau=\frac{e}{(1+D)^\alpha}^\frac{1}{\alpha+1}$$
\end{proof}
\fi

\subsection{Specialization and Symmetry}
\label{sec:specialization}

We use simulations to examine how these theoretical results translate to various graph families, like complete graphs, star graphs, cycle graphs, complete bipartite graphs and Erd\"os-Renyi random graphs. For a graph family, we look at graphs of sizes ranging from $n=4$ to $n=400$ and edge density from $p=0.0001$ to $p=0.5$ (for Erd\"os-Renyi graphs). We then run an iterative algorithm that updates the best response until convergence \cite{Ramach2015} . The point of convergence (when it converges) is the Nash equilibrium. In the cases that we examined, the best responses converged to an equilibrium within 20 steps (though our algorithm does not guarantee convergence).

\ifistr
\begin{algorithm}[h]
\SetAlgoNoLine
\KwIn{Graph $G=(V,E)$, Shelf-life $\tau$, initialization value (optional)}
\KwOut{The amount of effort $\mathbf{y^*}={y^*_1, ..y_I^*, ..y_N^*}$}
$\mathbf{y}$ = 0;
\Repeat{$\delta > 0$}{
     \For{each node $i\in V$}{ 
            $y_{-i} = \sum_{j\in N(i)}{y_j}$\;
            bestResponse $ = \frac{1}{\tau}W(\tau^2 e^{-\tau y_{-i}})$\;
            $\delta = abs(y_i-$bestResponse$)$\;
            $y_i = $bestResponse\;
    }
 }
\caption{Finding the Nash Equilibrium by Iterative Update}
\label{alg:ne_iterative}
\end{algorithm}
\fi

Considering, first, the case of symmetric graphs (figure \ref{fig:reg_networks_diff_tau}), each line in the graph is the effort made by a particular node. Note that since many nodes have the same effort across different regimes of $\tau$, those lines overlapping each other and are hence not visible. In both the bipartite and cycle graph, in the specialized equilibrium, half the nodes overlap and expend most of the effort and the remaining half free-ride on those nodes. We see that, with shorter shelf-lives, individuals are more self-reliant. Conversely, longer shelf lives result in individuals relying on others efforts. 
Both cycle graphs and complete bipartite graphs exhibit the property that when content is long-term, the equilibria becomes more specialized with some individuals doing the majority of the work and others doing almost no work. Bipartite graphs split into their two partitions where those in one partition do all the work while those in the other do none. 

\begin{figure}[!ht]
\vspace{-5pt}
\subfloat{\includegraphics[width=0.245\textwidth]{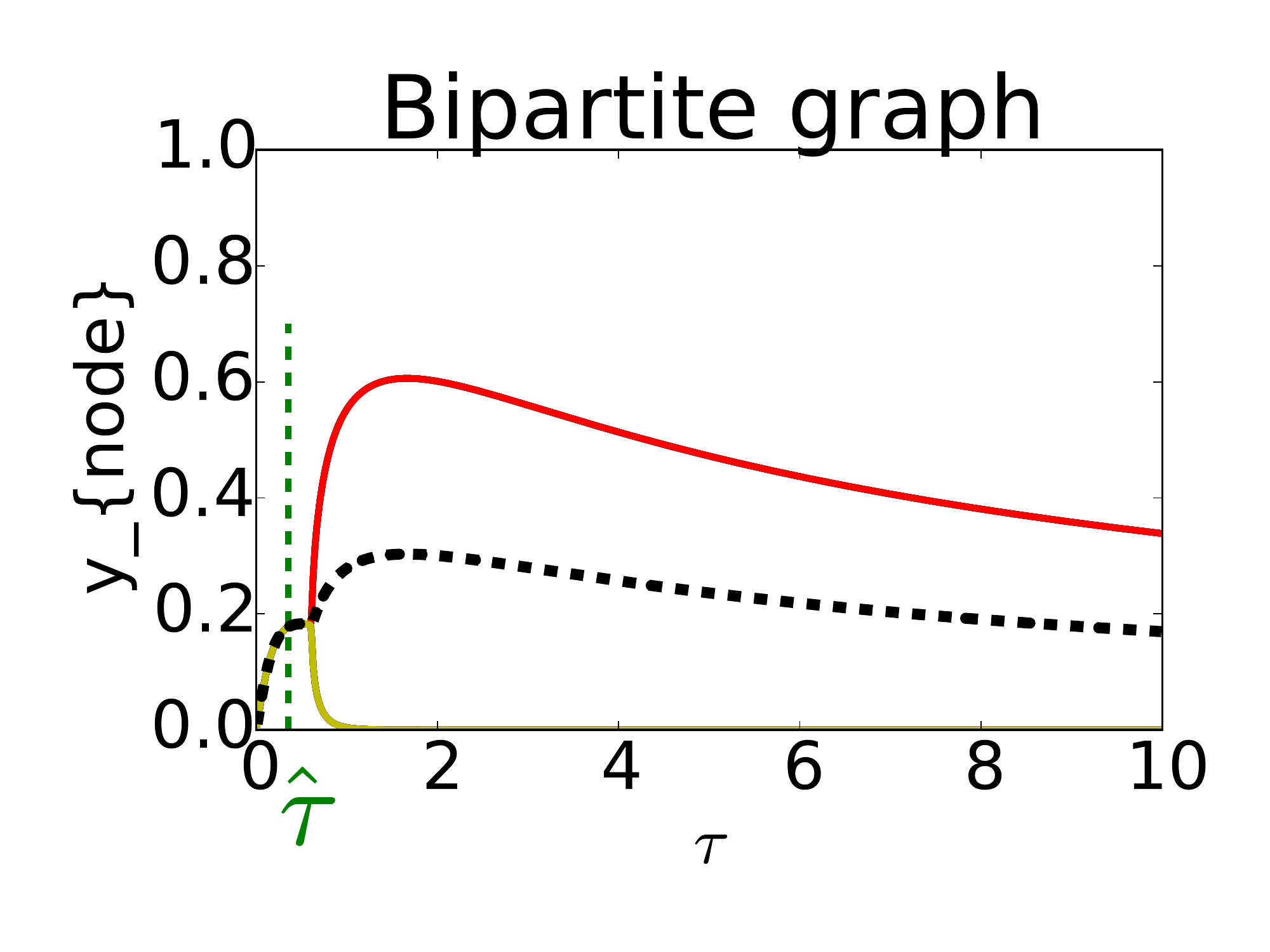}}
\subfloat{\includegraphics[width=0.245\textwidth]{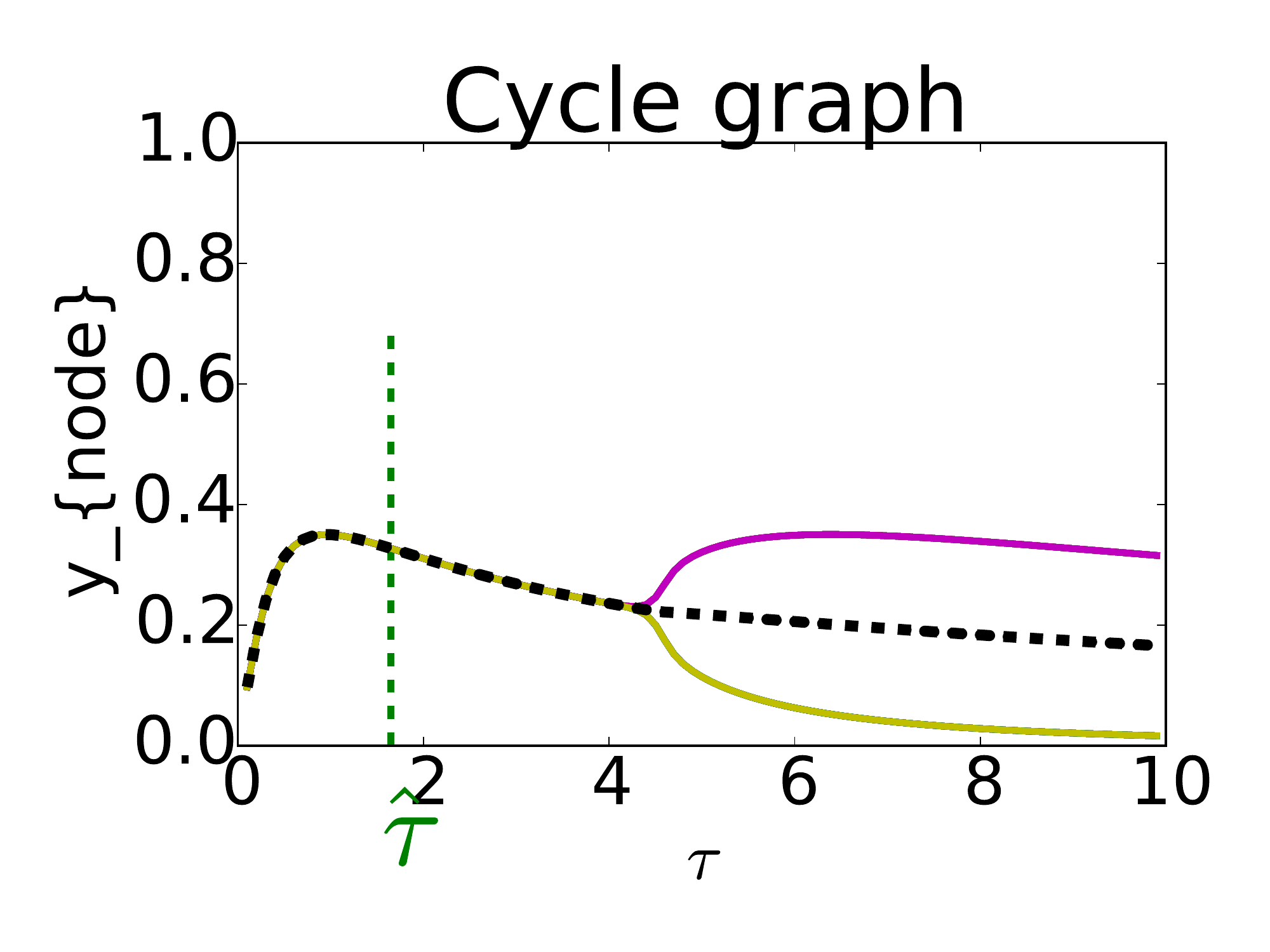}}
\vspace{-15pt}
\caption{Differing effort levels in the Nash Equilibrium (y-axis) with different $\tau$ (x-axis) in symmetric graphs. Each node (of $n=20$ nodes) is represented by a line in the figure. The unique equilibrium ($\tau<\hat{\tau}$) is always symmetric. (left) Complete bipartite graph (right) Cycle graph.}
\label{fig:reg_networks_diff_tau}
\end{figure}

The story is more complex in the case on assymetric graphs (figure \ref{fig:random_networks_diff_tau}). We consider the case of a star graph and an Erd\"os-Renyi graph, which gives us simple cases without the effect of heterogeneity. We also looked at a 10\% subset of a real world graph. 

\begin{figure}[!ht]
\begin{minipage}[c][5cm][t]{.2\textwidth}
  \vspace*{\fill}
  \centering
  \includegraphics[width=1\textwidth]{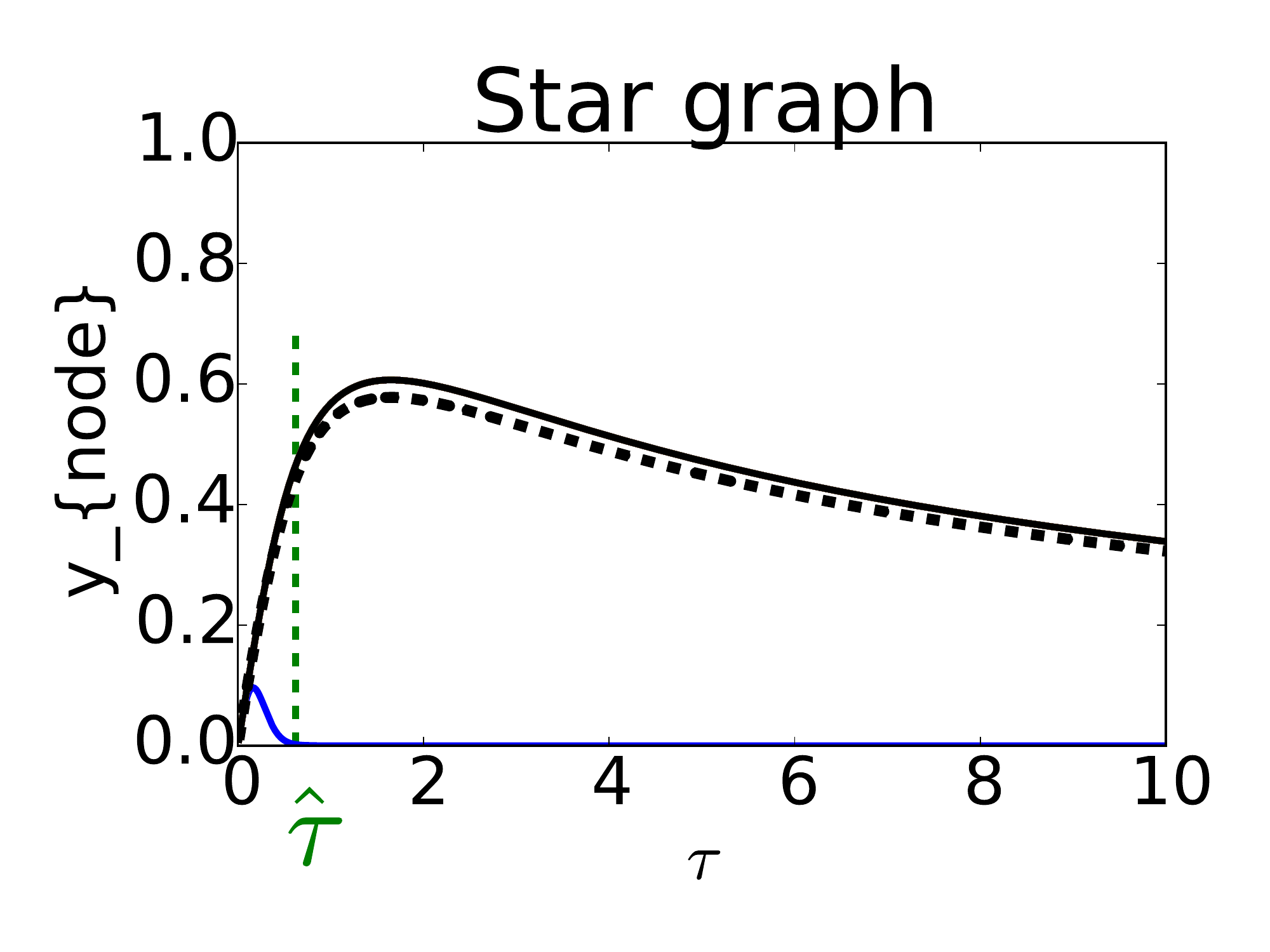}
\par\vfill
  \includegraphics[width=1\textwidth]{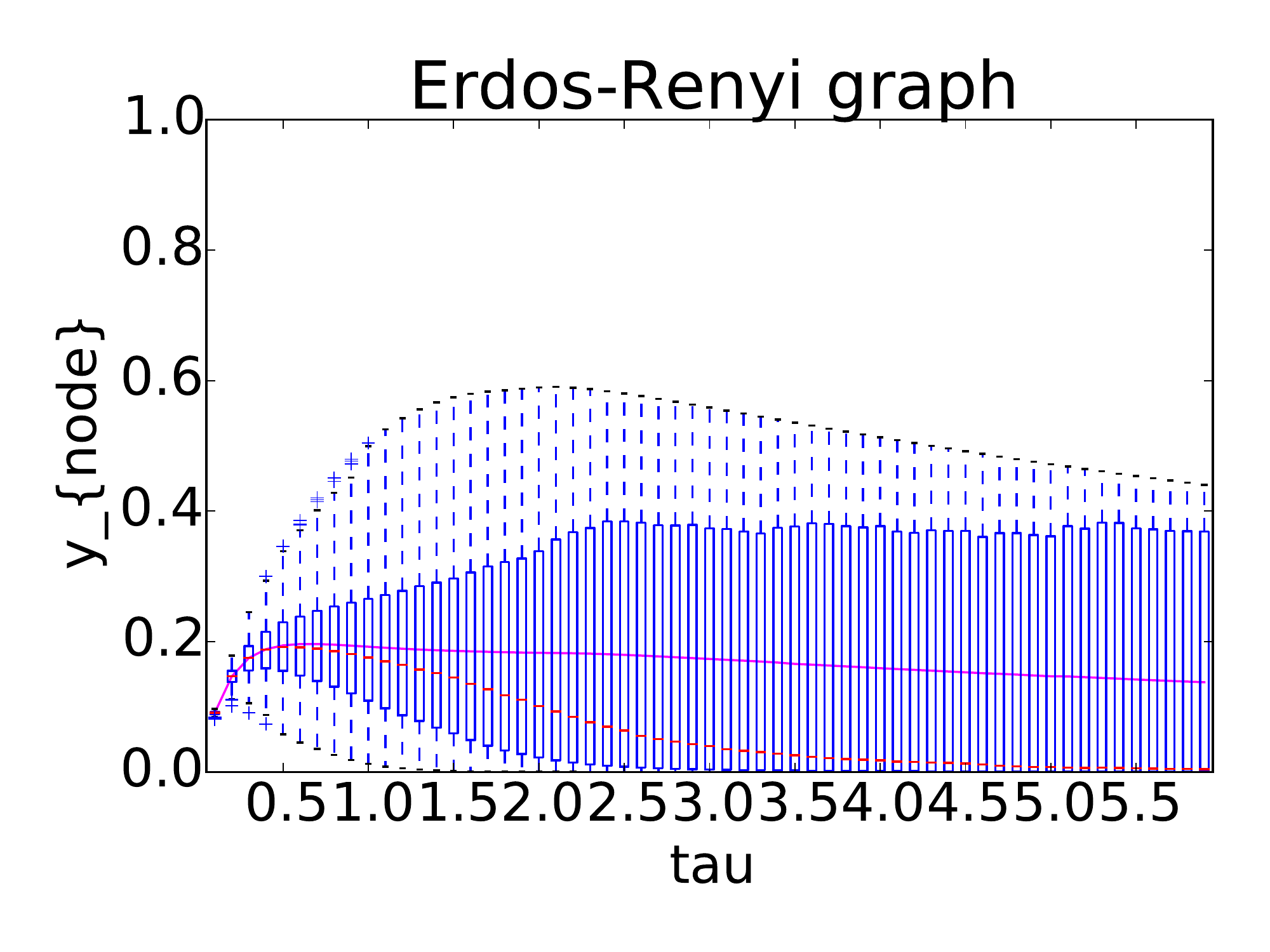}
\end{minipage}
\hspace{-10pt}
\begin{minipage}[c][5cm][t]{.295\textwidth}
  \vspace*{\fill}
  \centering
  \includegraphics[width=1\textwidth]{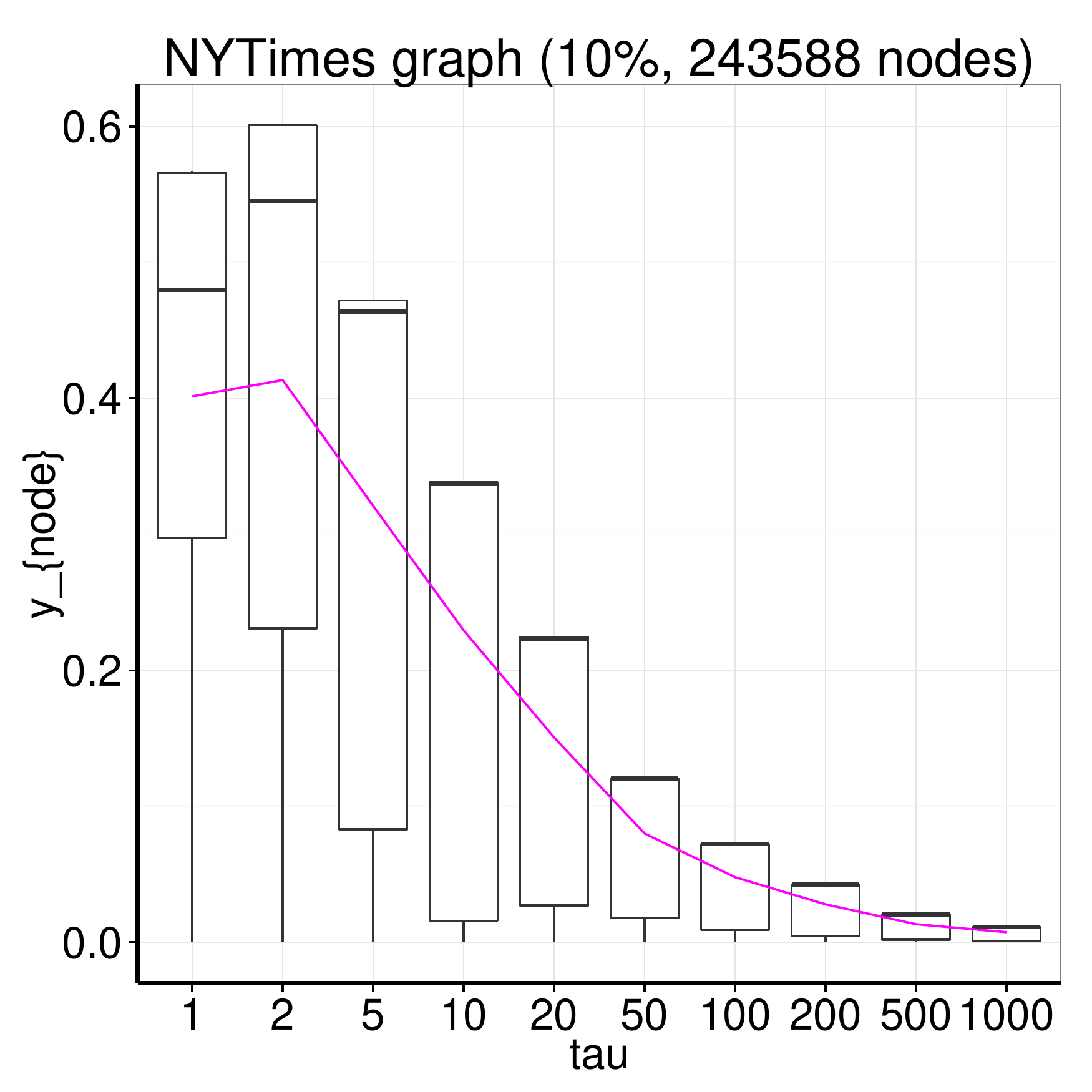}
\end{minipage}
\caption{Differing effort levels with different $\tau$ in assymetric graphs. Each bar represents the distribution of the amount of effort by all the nodes. The pink line is the average effort of all the nodes. (top-left) Star graph. (bottom-left) Erd\"os-Renyi graph ($n=1000,p = 0.01$). (right) Randomly sampled NYTimes graph with 243k nodes.}
  \label{fig:random_networks_diff_tau}
\end{figure}

In several of these graph families, we see that specialized equilibria occur. In the case of the star graph, the single central node does almost no work while all of his neighbors overlap and have much higher effort. In the case of random graphs or real world networks, it seems likely that a specialized equilibrium arises from the degree distribution of the nodes. However, in symmetric graphs, with all nodes having the same degree, clearly that is not the case. From lemma \ref{lem:symm_ne}, we know that a symmetric Nash equilibrium exists, but we observe that the system converges to a specialized Nash equilibrium. In the following section, we show that symmetric equilibria are not stable for large $\tau$.



\subsection{Theoretical Proof of Specialization}
When a unique Nash equilibrium exists, we understand the convergent network configuration. However, when there are multiple equilibria, it is not clear which of these configurations are realized --- for instance, some of these Nash equilibria can be unstable and, hence, never realized in practice. Here, we use the same definition of stability as in \cite{Bramoulle:2007va,Bramoulle:2010uj}. A Nash equilibrium is stable if a small change in the strategy of one player leads to a situation where two conditions hold: (i) the player who did not change has no better strategy in the new circumstance (ii) the player who did change is now playing with a strictly worse strategy.

Empirically, we observe that for longer-term content, the equilibrium for a cycle graph and a bipartite graph are specialized (figure \ref{fig:reg_networks_diff_tau}), \emph{inspite} of them being symmetric graphs. This indicates that the stability of the Nash equilibrium has some dependency on $\tau$. 
\begin{theorem} \ifistr [Specialization for Longer Shelf-Life] \fi
There exists an shelf-life $\tau$, such that, for any symmetric graph $G$ of degree $D\geq 3$, the symmetric equilibrium is not stable. 
\label{thm:stability}
\end{theorem}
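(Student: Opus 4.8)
The plan is to linearize the best-response dynamics around the symmetric equilibrium $y^\star$ of Lemma~\ref{lem:symm_ne} and show that, once $\tau$ is large, this linearization is expanding, so that an arbitrarily small perturbation of a single node does not return. Since a single-node perturbation propagates only within that node's connected component, and a component of a symmetric $D$-regular graph is itself connected and $D$-regular, we may assume $G$ is connected and $D$-regular. Because $W>0$ on $(0,\infty)$, every coordinate of $y^\star$ is strictly positive, the nonnegativity constraint is inactive near $y^\star$, and the best-response map $\Phi:(y_i)\mapsto\big(\phi(\sum_{j\sim i}y_j)\big)_i$ is smooth there with Jacobian $J=\phi'(Dy^\star)\,A$, $A$ the adjacency matrix of $G$. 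As $A$ is symmetric with top eigenvalue $D$ (Perron eigenvector $\mathbf 1$) and all eigenvalues in $[-D,D]$, we get $\rho(J)=|\phi'(Dy^\star)|\cdot D$. Under the stability notion of \cite{Bramoulle:2007va,Bramoulle:2010uj}, $y^\star$ fails to be stable as soon as $\rho(J)>1$: a single-node perturbation $e_k$ has nonzero component along $\mathbf 1$ (and, when $G$ is bipartite, along the $\lambda_{\min}(A)$-eigenvector, which is the $\pm1$ indicator of the two sides of the bipartition -- the mode whose growth produces genuine specialization), and that component is amplified by the factor $|\phi'(Dy^\star)|\,D>1$ per step.

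It remains to force $|\phi'(Dy^\star)|\,D>1$ uniformly. From the closed form of the best response $\phi$ and the identity $W'(z)=\tfrac{W(z)}{z(1+W(z))}$ (Lemma~\ref{lem:lambert_deriv}) one has $\phi'(x)=-\tfrac{W(g(x))}{1+W(g(x))}$ with $g(x)=\tfrac{\tau^{\frac{\alpha+1}{\alpha}}}{\alpha}e^{-\tau x/\alpha}$. Evaluating at $x=Dy^\star$ and combining $\phi(Dy^\star)=y^\star$ with Lemma~\ref{lem:symm_ne} yields $W(g(Dy^\star))=\tfrac{w}{1+D}$, where $w:=W\!\big(\tfrac{\tau^{\frac{\alpha+1}{\alpha}}(1+D)}{\alpha}\big)$, hence $|\phi'(Dy^\star)|=\tfrac{w}{1+D+w}$. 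Thus $|\phi'(Dy^\star)|\,D>1$ is equivalent to $w(D-1)>D+1$, i.e. to $W\!\big(\tfrac{\tau^{\frac{\alpha+1}{\alpha}}(1+D)}{\alpha}\big)>\tfrac{D+1}{D-1}$. The right side decreases in $D$ and is at most $2$ for $D\ge 3$, while the left side increases in $D$ and in $\tau$; so the binding case is $D=3$, and since $W$ is increasing with $W(2e^2)=2$, any $\tau>\big(\tfrac{\alpha e^2}{2}\big)^{\frac{\alpha}{\alpha+1}}$ makes the inequality hold simultaneously for every symmetric graph of degree $D\ge3$. This is the $\tau$ asserted by the theorem.

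The delicate step is the first one: pinning down exactly which dynamical reading of ``stability'' is intended and checking that a single-node (not coordinated) perturbation genuinely fails to return -- one must verify the exciting eigenmode has nonzero overlap with a coordinate vector, and that once the trajectory leaves the linear regime (possibly hitting the boundary $\{y_i=0\}$) it still does not settle back to $y^\star$, landing instead on a specialized boundary equilibrium when $G$ is bipartite, or oscillating otherwise. The remaining algebra -- the Lambert computation and the monotonicity in $D$ -- is routine. One caveat: if instead one uses the continuous-time ``network non-normality'' criterion $|\phi'(Dy^\star)|\,|\lambda_{\min}(A)|>1$, the statement would have to exclude the complete graph (where $\lambda_{\min}=-1$), so the version above tacitly relies on the best-response (Jacobi) dynamics, whose contraction factor is governed by $\lambda_{\max}(A)=D$.
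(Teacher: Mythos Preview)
Your argument is correct and essentially identical to the paper's: both linearize the best-response map at the symmetric equilibrium and derive the same instability condition $|\phi'(Dy^\star)|\,D>1$, equivalently $W\!\big(\tfrac{\tau^{(\alpha+1)/\alpha}(1+D)}{\alpha}\big)>\tfrac{D+1}{D-1}$; the paper simply perturbs along $\mathbf 1$ and computes $\Phi\circ\Phi$ directly (for $\alpha=1$, then plugs in $\tau=10$), whereas you phrase it via $\rho(J)$ and extract an explicit threshold $\tau>(\alpha e^2/2)^{\alpha/(\alpha+1)}$ valid for all $\alpha$. Regarding your caveat, the paper indeed uses the $\lambda_{\max}(A)=D$ mode (the uniform $\mathbf 1$ perturbation), not $\lambda_{\min}$, so your reading of the intended dynamics is the right one.
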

\begin{proof}
The proof follows the outline of the Proof of Theorem 2 in \cite{Bramoulle:2007va}. 
It has two steps. The first step is a simple observation: 
$\text{If } \vec{y} < \vec{y'}\text{, then }\phi \circ \phi(\vec{y})<\phi \circ \phi(\vec{y}').$ 
This follows because the response function $\phi(y)$ is a decreasing function of $y$. 

The second step is to show that under some small perturbation $\vec{\epsilon}>0$, we have 
$\phi\circ \phi(\vec{y}+\vec{\epsilon})> \vec{y}+\vec{\epsilon}$
(here the vector inequality $\vec{x}>0$ corresponds to coordinate wise inequality $x_i >0 ~\forall i$). In other words, with any small change from the equilibrium, the best response moves further away (strictly) from the equilibrium. This shows that the equilibrium is not stable in the sense of \cite{Bramoulle:2007va,Bramoulle:2010uj}. For simplicity's sake, we consider only a quadratic cost function.
  
Let $\mathbf{\tilde{y}}$ be the symmetric equilibrium in the symmetric graph of degree $D$. Then, $\tilde{y_i}=\tilde{y}, \forall i$. Note that $\tilde{y}=\phi(\tilde{y})$ because it is an equilibrium. Here, we perturb all the responses by some $\epsilon>0$
\begin{align*}
\phi(\vec{y}+\vec{\epsilon})&=\phi(\vec{y})+\nabla \phi \cdot \vec{\epsilon}\\
\phi_i(\vec{y}+\vec{\epsilon})&=\phi_i(\vec{y})+D\frac{\partial \phi_i}{\partial y_j}\epsilon \quad \text{for some } j\in N(i) 
\end{align*}
since $\frac{\partial \phi_i}{\partial y_j} = 0 \quad \text{if } j\notin N(i)$ and equal otherwise. Similarly, 
\begin{align*}  
\phi_i\circ \phi(\vec{y}+\vec{\epsilon}) &=\phi_i([\dots,\phi_j(\vec{y}+\vec{\epsilon}),\dots])\\
&=y_i+D^2\big(\frac{\partial \phi_i}{\partial y_j}\big)^2\epsilon \qquad \text{any } j\in N(i)
\end{align*}
To show that the symmetric equilibrium is not stable, we need
\begin{align*}
y_i + D^2\big(\frac{-W(\tau^2e^{-\tau \tilde{y}})}{1+W(\tau^2e^{-\tau \tilde{y}})}\big)^2\epsilon&> y_i+\epsilon\\
W(\tau^2e^{-\tau \tilde{y}})&> \frac{1}{D-1}
\end{align*}
In other words, we want $\tau^2e^{-\tau \tilde{y}}> \frac{1}{D-1}e^{\frac{1}{D-1}}$. Substituting for $\tilde{y}$ (lemma \ref{lem:symm_ne}) and simplifying, we get that the symmetric Nash equilibrium is not stable when 
$$2\ln{\tau}-\frac{1}{(D+1)}W(\tau^2(D+1))  > -\ln{(D-1)}+\frac{1}{D-1}.$$

Setting $\tau$ to be a constant (\eg $\tau=10$), one only needs to verify that the following holds: $W(D+1)<(D+1)(\ln{(D-1)}+2\ln{\tau})-\frac{D+1}{D-1}$, which is true for $D>3$.
\end{proof}

\section{Related Work}
\label{sec:relwor}
Our contributions relate and contribute to several directions of research: 

\textbf{(1) Studies of online diffusion of information} have previously established the importance of content produced by mass media in online diffusion. They highlight in particular that news typically reaches a large audience not directly but through a set of influencers or connectors~\cite{Cha:2012kz, Wu:2011cd}. This result confirms the classical hypothesis of a two-step information flow~\cite{Katz:1957js}, and was shown to have additional benefits, such as broadening the range of opinions seen by a user~\cite{An:2011ur}. However, the dynamics of participation and influence remains elusive. For instance, relying on number of followers to judge an influencer can be misleading~\cite{Cha:2010tr,Bakshy:2011uw} and predicting who is successful at an individual level was shown to be generally unreliable~\cite{Bakshy:2011uw}. 
Our work takes a different starting point: We follow evidence that a large fraction of diffusion cascades occur close to a seed node~\cite{Goel:2012io}. Hence we focus on identifying those who contribute in adding \emph{original} content in the network, and how this relates to temporal characteristics of the content being exchanged. Previous studies of temporal properties of diffusion typically focused on leveraging that those are short-lived~\cite{Cha:2009vj,Rodriguez:2011vy}, or on using patterns in the time series for better classification~\cite{Kwon:2014vh,Kamath:2013tf,Yang:2011wt}. 

\textbf{(2) Analysis of the private provision of public goods}, or investments made by players that more generally affect the outcome of others, originally emerged to inform public policy. Its most celebrated result, the \emph{neutrality principle}~\cite{Bergstrom:1986ej}, states that the investment produced by a group is entirely carried by most wealthy individuals, and is insensitive to income redistribution. This, however, holds only for a \emph{global} public good in which all players are equally affected by others, and recently was shown not to generalize beyond regular graphs~\cite{Allouch:2012do}. The general network case was studied more recently~\cite{Ballester:2006hs,Bramoulle:2007va,Bramoulle:2010uj}, typically in a model assuming that a player’s best response follows from other player’s actions in a linear matrix form. Even for that simple case, predictions vastly differ: On the one hand, a study of small effects~\cite{Bramoulle:2010uj} proves that the system converges to a unique equilibrium in which all participate. 
On the other hand, more general cases prove that specialization is unavoidable, and that multiple equilibria can be attained~\cite{Bramoulle:2007va}. Our analysis extends those results by providing the first non-linear dynamics for which a similar dichotomy can be proved; in particular, it proves that a simple model of perishable public goods leads to either of these behaviors depending on the product lifespan. 

\textbf{(3) The role of elites in information acquisition} has been studied in very different contexts such as social learning~\cite{Bala:1998cr,Acemoglu:2008uy} and opinion formation~\cite{Golub:2010er,Acemoglu:2009ui}. 
Those results are different in spirit as they typically focus on aggregation of multiple contributions on the same specific topic, either within a social networks or in the presence of a kernel of experts. For that reason, they typically assume specific types of information or interactions. Our model focuses on a simpler model in which information can be produced under some exerted effort, but is free to reproduce within a given network. The work motivated similarly to ours considers a similar process in an endogenous network where players may create new links at a fixed cost~\cite{Galeotti:2010tj}. It was shown that these dynamics typically lead to extreme specialization, even among \emph{ex ante} identical players. However, Heterogeneous systems can't be analyzed in the same manner, and networks produced are typically very schematic (bi-partite). Our work proves that specialization emerges in an exogenous network, even without the reinforcing process of strategic link formation. 

\section{Conclusion}
\label{sec:conclusion}
Knowledge sharing has been greatly facilitated by social network services. Increasingly, it affects businesses, political debates and public services. Yet, after years of measurements, the structure of online diffusion remains complex and was shown to vary across media and topics. Our results identify, for the first time, how the shelf life of information affects its diffusion. This leads to various types of specialization that can all be described in the unifying theory of public good. 

While we empirically observe a remarkable match to the theoretical predictions on a qualitative level, we would like to point out that the current model of public good we introduce is highly idealized, especially as it assumes homogeneous cost of information acquisition. Proving that specialization occurs even in such symmetric cases is, in a sense, a worst-case result. In reality, several other factors contribute to users exerting higher effort in information acquisition including enjoyment~\cite{Feick:1987ho}, which typically varies across users depending on topics. However, our results generalize to heterogeneous perishable public goods to predict, for instance, that a single equilibrium exists whenever shelf life is sufficiently small. The qualitative effect of shelf life should also remain since our empirical observations prove it, even in a large number of very different mass media sources. We do, however, observe some amount of variance within this trend and accounting for other previously identified factors to predict span of content diffusion more accurately seems a promising direction.

Whenever public good theory allows for simple equilibrium computation, i.e. for short lived content, it also yields additional insight on how to locally or globally optimize content to encourage more participation. Ultimately, testing if those insights provide algorithms to design effective incentives to users for enhanced participation offers a way to validate those claims.

\section{Acknowledgements}
We would like to thank Meeyoung Cha for providing access and help on the Twitter Data used for comparison. This material is based upon work supported by the National Science Foundation under grant no. CNS-1254035 and through a Graduate Research Fellowship to Arthi Ramachandran. This research was also funded by Microsoft Research under a Graduate Fellowship.

\bibliographystyle{abbrv}
\bibliography{EC_papers,cosn_papers}

\begin{thebibliography}{10}

\bibitem{Acemoglu:2008uy}
D.~Acemoglu, M.~A. Dahleh, I.~Lobel, and A.~Ozdaglar.
\newblock {Bayesian learning in social networks}.
\newblock {\em The Review of Economic Studies}, 78(4):1201--1236, 2011.

\bibitem{Acemoglu:2009ui}
D.~Acemoglu, A.~Ozdaglar, and A.~ParandehGheibi.
\newblock {Spread of (mis) information in social networks}.
\newblock {\em Games and Economic Behavior}, 70(2):194--227, 2010.

\bibitem{Allouch:2012do}
N.~Allouch.
\newblock {On the Private Provision of Public Goods on Networks}.
\newblock {\em Journal of Economic Theory}, forthcoming:1--34, 2015.

\bibitem{An:2011ur}
J.~An, M.~Cha, K.~Gummadi, and J.~Crowcroft.
\newblock {Media landscape in Twitter: A world of new conventions and political
  diversity}.
\newblock In {\em Proceedings of the International Conference Weblogs and
  Social Media (ICWSM)}, pages 18--25, 2011.

\bibitem{An:2014tu}
J.~An, D.~Quercia, and J.~Crowcroft.
\newblock {Recommending investors for crowdfunding projects}.
\newblock In {\em WWW '14: Proceedings of the 23rd international conference on
  World wide web}. International World Wide Web Conferences Steering Committee,
  Apr. 2014.

\bibitem{Bakshy:2011uw}
E.~Bakshy, J.~M. Hofman, W.~A. Mason, and D.~J. Watts.
\newblock {Everyone's an influencer: quantifying influence on twitter}.
\newblock In {\em WSDM '11: Proceedings of the fourth ACM international
  conference on Web search and data mining}. ACM Request Permissions, Feb.
  2011.

\bibitem{Bala:1998cr}
V.~Bala and S.~Goyal.
\newblock {Learning from Neighbours}.
\newblock {\em Review of Economic Studies}, 65(3):595--621, July 1998.

\bibitem{Ballester:2006hs}
C.~Ballester, A.~Calv{\'o}-Armengol, and Y.~Zenou.
\newblock {Who's Who in Networks. Wanted: The Key Player}.
\newblock {\em Econometrica}, 74(5):1403--1417, Sept. 2006.

\bibitem{Bergstrom:1986ej}
T.~Bergstrom, L.~Blume, and H.~Varian.
\newblock {On the private provision of public goods}.
\newblock {\em Journal of Public Economics}, 29(1):25--49, Feb. 1986.

\bibitem{Bramoulle:2007va}
Y.~Bramoull{\'e} and R.~Kranton.
\newblock {Public goods in networks}.
\newblock {\em Journal of Economic Theory}, 135(1):478--494, July 2006.

\bibitem{Bramoulle:2010uj}
Y.~Bramoull{\'e}, R.~Kranton, and M.~D'amours.
\newblock {Strategic interaction and networks}.
\newblock {\em American Economic Review}, 104(3):898--930, 2014.

\bibitem{Byers:2014vl}
J.~Byers, D.~Proserpio, and G.~Zervas.
\newblock {The Rise of the Sharing Economy: Estimating the Impact of Airbnb on
  the Hotel Industry}.
\newblock {\em Boston U. School of Management Research Paper (Forthcoming)},
  pages 1--36, Jan. 2014.

\bibitem{Cha:2012kz}
M.~Cha, F.~Benevenuto, H.~Haddadi, and K.~Gummadi.
\newblock {The World of Connections and Information Flow in Twitter}.
\newblock {\em Systems, Man and Cybernetics, Part A: Systems and Humans, IEEE
  Transactions on}, 42(4):991--998, 2012.

\bibitem{Cha:2010tr}
M.~Cha, H.~Haddadi, F.~Benevenuto, and K.~Gummadi.
\newblock {Measuring User Influence in Twitter: The Million Follower Fallacy}.
\newblock In {\em Proceedings of the International Conference Weblogs and
  Social Media (ICWSM)}, 2010.

\bibitem{Cha:2009vj}
M.~Cha, H.~Kwak, P.~Rodriguez, Y.-Y. Ahn, and S.~Moon.
\newblock {Analyzing the video popularity characteristics of large-scale user
  generated content systems}.
\newblock {\em IEEE/ACM Transactions on Networking (TON}, 17(5):1357--1370,
  2009.

\bibitem{Cici:2014gs}
B.~Cici, A.~Markopoulou, E.~Frias-Martinez, and N.~Laoutaris.
\newblock {Assessing the potential of ride-sharing using mobile and social
  data: a tale of four cities}.
\newblock In {\em Proceedings of the 2014 ACM International Joint Conference on
  Pervasive and Ubiquitous Computing}, pages 201--211, 2014.

\bibitem{Cross:2004wy}
R.~L. Cross and A.~Parker.
\newblock {\em {The hidden power of social networks}}.
\newblock Harvard Business School Press, 2004.

\bibitem{Feick:1987ho}
L.~F. Feick and L.~L. Price.
\newblock {The Market Maven: A Diffuser of Marketplace Information}.
\newblock {\em Journal of Marketing}, 51(1):83--97, Jan. 1987.

\bibitem{Galeotti:2010tj}
A.~Galeotti and S.~Goyal.
\newblock {The law of the few}.
\newblock {\em American Economic Review}, 100(4):1468--1492, 2010.

\bibitem{Geissler:2005ce}
G.~L. Geissler and S.~W. Edison.
\newblock {Market Mavens' Attitudes Towards General Technology: Implications
  for Marketing Communications}.
\newblock {\em Journal of Marketing Communications}, 11(2):73--94, June 2005.

\bibitem{Goel:2012io}
S.~Goel, D.~J. Watts, and D.~G. Goldstein.
\newblock {The structure of online diffusion networks}.
\newblock In {\em EC '12: Proceedings of the 13th ACM Conference on Electronic
  Commerce}, 2012.

\bibitem{Golub:2010er}
B.~Golub and M.~O. Jackson.
\newblock {Naive learning in social networks and the wisdom of crowds}.
\newblock {\em American Economic Journal: Microeconomics}, 2(1):112--149, 2010.

\bibitem{Kamath:2013tf}
K.~Y. Kamath, J.~Caverlee, K.~Lee, and Z.~Cheng.
\newblock {Spatio-temporal dynamics of online memes: a study of geo-tagged
  tweets}.
\newblock In {\em WWW '13: Proceedings of the 22nd international conference on
  World Wide Web}. International World Wide Web Conferences Steering Committee,
  May 2013.

\bibitem{Katz:1957js}
E.~Katz.
\newblock {The Two-Step Flow of Communication: An Up-To-Date Report on an
  Hypothesis}.
\newblock {\em Public Opinion Quarterly}, 21(1):61, 1957.

\bibitem{Kwon:2014vh}
S.~Kwon and M.~Cha.
\newblock {Modeling Bursty Temporal Pattern of Rumors}.
\newblock In {\em Proceedings of the International Conference Weblogs and
  Social Media (ICWSM)}, 2014.

\bibitem{PF:1948us}
P.~Lazarsfeld, B.~Berelson, and H.~Gaudet.
\newblock {\em {The peoples choice: how the voter makes up his mind in a
  presidential campaign.}}
\newblock Columbia University Press, 1948.

\bibitem{Liu:2014dw}
Y.~Liu, C.~Kliman-Silver, R.~Bell, B.~Krishnamurthy, and A.~Mislove.
\newblock {Measurement and analysis of osn ad auctions}.
\newblock {\em COSN '14: Proceedings of the 2nd ACM conference on Online social
  networks}, pages 139--150, 2014.

\bibitem{Lorenz:1905vb}
M.~O. Lorenz.
\newblock {Methods of measuring the concentration of wealth}.
\newblock {\em Publications of the American Statistical Association},
  9(70):209--219, 1905.

\bibitem{May:2014iwa}
A.~May, A.~Chaintreau, N.~Korula, and S.~Lattanzi.
\newblock {Filter {\&} Follow: How Social Media Foster Content Curation}.
\newblock In {\em SIGMETRICS '14: Proceedings of the ACM International
  conference on Measurement and modeling of computer systems}, pages 43--55,
  New York, New York, USA, 2014. ACM Press.

\bibitem{Olmstead:2011wz}
K.~Olmstead, A.~Mitchell, and T.~Rosenstiel.
\newblock {\em {Navigating News Online: Where people go, how they get there,
  and what lures them away}}.
\newblock Pew Research Center's Project for Excellence, 2011.

\bibitem{Ramach2015}
A.~Ramachandran and A.~Chaintreau.
\newblock {The eigenvalues of random symmetric matrices}.
\newblock {\em Combinatorica}, 1:233--241, 2015.

\bibitem{Rodriguez:2011vy}
M.~G. Rodriguez, D.~Balduzzi, and B.~Sch{\"o}lkopf.
\newblock {Uncovering the temporal dynamics of diffusion networks}.
\newblock In {\em Proceedings of ICML}, 2011.

\bibitem{Wu:2011cd}
S.~Wu, J.~M. Hofman, W.~A. Mason, and D.~J. Watts.
\newblock {Who says what to whom on twitter}.
\newblock In {\em WWW '11: Proceedings of the 20th international conference on
  World wide web}. ACM Request Permissions, Mar. 2011.

\bibitem{Yang:2011wt}
J.~Yang and J.~Leskovec.
\newblock {Patterns of temporal variation in online media}.
\newblock In {\em WSDM '11: Proceedings of the fourth ACM international
  conference on Web search and data mining}. ACM Request Permissions, Feb.
  2011.

\end{thebibliography}

\end{document}


\received{??}{??}{??}


\begin{bottomstuff}

\end{bottomstuff}
\end{document}